\documentclass[12pt,letterpaper]{article}
\usepackage[english]{babel}  
\usepackage{geometry}
\RequirePackage[OT1]{fontenc}
\RequirePackage{graphicx}
\geometry{left=2.5cm, right=2.5cm, top=2.5cm, bottom=2.5cm}
\usepackage[margin=1cm]{caption}
\RequirePackage{amsthm,amssymb}
\RequirePackage[cmex10]{amsmath}
\RequirePackage[colorlinks,citecolor=blue,urlcolor=blue]{hyperref}
\usepackage[round,authoryear]{natbib}
\usepackage{graphicx}
\usepackage{listing}
\usepackage{amsfonts}
\usepackage{amsmath}		
\usepackage{bbm} 	
\usepackage{graphicx,subfigure}
\usepackage{amsthm}
\usepackage{booktabs}
\usepackage{accents}
\usepackage{xcolor}
\usepackage{tabu}
\usepackage{caption} 
\usepackage{verbatim}
\usepackage{multirow}
\usepackage{authblk}
\usepackage[titletoc]{appendix}
\usepackage{array}
\usepackage{float}
\captionsetup[table]{skip=10pt}
\linespread{1.5}
\setlength{\footnotesep}{1\baselineskip}
\interfootnotelinepenalty=10000

\newtheorem{theorem}{Theorem}
\newtheorem{corollary}{Corollary} 

\newtheorem{proposition}{Proposition}
\newtheorem{assumption}{Assumption}
\newtheorem{remark}{Remark}[section]
\theoremstyle{definition}

\newtheorem*{genericthm*}{\thistheoremname}
\newenvironment{namedthm*}[1]
{\renewcommand{\thistheoremname}{#1}%
\begin{genericthm*}}
{\end{genericthm*}}
\newcommand{\thistheoremname}{}
  
\makeatletter
\setlength{\@fptop}{0pt}
\makeatother
\usepackage{color}
\usepackage{array}
\definecolor{dkgreen}{rgb}{0.0,0.53,0.74}
\definecolor{gray}{rgb}{0.5,0.5,0.5}
\definecolor{mauve}{rgb}{0.58,0,0.82}

\newlength{\defaultbaselineskip}
\setlength{\defaultbaselineskip}{1.2\baselineskip}

\usepackage{scalerel,stackengine}
\stackMath
\newcommand\widecheck[1]{%
\savestack{\tmpbox}{\stretchto{%
  \scaleto{%
    \scalerel*[\widthof{\ensuremath{#1}}]{\kern-.3pt\bigwedge\kern-.3pt}%
    {\rule[-\textheight/2]{1ex}{\textheight}}
  }{\textheight}%
}{0.4ex}}%
\stackon[1pt]{#1}{\scalebox{-1}{\tmpbox}}%
}
\parskip 1ex


\newcommand{\dd}{d}
\newcommand{\SR}{\mathbb{R}}
\newcommand{\SN}{\mathbb{N}}
\newcommand{\E}{\mathbb{E}}

\newcommand{\filtr}{\mathcal{F}}
\newcommand{\prob}{\mathbb{P}}
\newcommand{\indicator}{\mathbbm{1}}
\newcommand{\measure}{\mathbb{P}}
\newcommand{\sigfield}{\mathcal{F}}
\newcommand{\stoptime}{\mathbf{N}}
\newcommand{\cv}{\xi}

\newcommand{\logs}{Y}
\newcommand{\elogs}{X}
\newcommand{\drift}{b}

\newcommand{\jump}{J}

\newcommand{\pn}{H}
\newcommand{\lrv}{\gamma} 
\newcommand{\cs}{\kappa} 
\newcommand{\vov}{\varsigma} 
\newcommand{\bm}{W}  
\newcommand{\Prm}{\mu}  
\newcommand{\cmps}{\nu}  
\newcommand{\cmpsmsr}{F}  

\newcommand{\ARLH}{D}
\newcommand{\ARLHc}{d(\bar{c})}

\makeatletter
\newcommand*\rel@kern[1]{\kern#1\dimexpr\macc@kerna}
\newcommand*\widebar[1]{%
  \begingroup
  \def\mathaccent##1##2{%
    \rel@kern{0.8}%
    \overline{\rel@kern{-0.8}\macc@nucleus\rel@kern{0.2}}%
    \rel@kern{-0.2}%
  }%
  \macc@depth\@ne
  \let\math@bgroup\@empty \let\math@egroup\macc@set@skewchar
  \mathsurround\z@ \frozen@everymath{\mathgroup\macc@group\relax}%
  \macc@set@skewchar\relax
  \let\mathaccentV\macc@nested@a
  \macc@nested@a\relax111{#1}%
  \endgroup
}
\makeatother

\graphicspath{{figures/}}

\title{Real-Time Detection of Local No-Arbitrage Violations\footnote{We are grateful to Chun He for excellent research assistance and to participants at various conferences and seminars for useful comments and suggestions.}}
\author[1]{Torben G. Andersen}
\author[1]{Viktor Todorov}
\author[2]{Bo Zhou}
\affil[1]{Kellogg School, Northwestern University}
\affil[2]{Department of Economics, Virginia Tech}
\date{}

\begin{document}

\maketitle
\thispagestyle{empty}
\abstract{\noindent This paper focuses on the task of detecting local episodes involving violation of the standard It\^o semimartingale assumption for financial asset prices in \textit{real time} that might induce arbitrage opportunities. Our proposed detectors, defined as stopping rules, are applied sequentially to continually incoming high-frequency data. We show that they are asymptotically exponentially distributed in the absence of It\^o semimartingale violations. On the other hand, when a violation occurs, we can achieve immediate detection under infill asymptotics. A Monte Carlo study demonstrates that the asymptotic results provide a good approximation to the finite-sample behavior of the sequential detectors. An empirical application to S\&P 500 index futures data corroborates the effectiveness of our detectors in swiftly identifying the emergence of an extreme return persistence episode in real time.\\}

\noindent \textbf{JEL classification:} C12, C53, G10, G17 

\noindent \textbf{Keywords:} asset price, high-frequency data, It\^o~semimartingale violation, real-time detection, stopping rule 

\pagenumbering{arabic}
\renewcommand{\arraystretch}{1.5}\setlength{\baselineskip}{6.5mm}

\clearpage

\section{Introduction}

The \textit{no-arbitrage principle} is central to modern asset pricing theory (\citet{ross1976arbitrage}). \citet{delbaen1994general} demonstrate, within a frictionless setting, that arbitrage opportunities are precluded if and only if the price process constitutes a semimartingale. The standard class of no-arbitrage price processes in financial economics is the It\^o semimartingale, which is a semimartingale with characteristics that are absolutely continuous in time. However, recent work document episodic violations of the It\^o semimartingale assumption that, absent transaction costs, might induce arbitrage opportunities. A prominent example is the \textit{gradual jump} identified by \citet{barndorff2009realized} and further studied by \citet{christensen2014fact}. It occurs when an apparent return jump, identified from lower frequency data, instead reflects a strongly drifting, yet (near) continuous price path, when observed at higher frequencies. A related phenomenon is the so-called \textit{flash crash}, where a sudden collapse in price is reversed rapidly; see, e.g., the work on the May 2010 events in the S\&P 500 e-mini futures market by \citet{kirilenko2017flash} and \citet{menkveld2019flash}.\footnote{In the presence of trading costs and uncertainty surrounding the data generating process, such episodes are not necessarily true arbitrage opportunities, see, e.g., the discussion in \cite{andersen2021volatility}.}

The ``explosive'' price paths characterizing such events are unlikely to be generated by an It\^o semimartingale. To accommodate these occurrences, alternative models have been developed for violation episodes, including the \textit{drift burst} model proposed by \cite{christensen2020drift} and the \textit{persistent noise} by \cite{andersen2021volatility}, as a stochastic generalization of the former. These models contain a parameter $\tau$, indicating a random point in time located within a neighborhood in which an It\^o semimartingale violation occurs. Such episodes typically involve turbulent market conditions with extreme realized volatility, raising concerns of evaporating liquidity and general market malfunction. Moreover, our standard measures for monitoring return volatility are potentially subject to large biases, when the semimartingale assumption is violated. Consequently, identification of the onset, indicated by $\tau$, as well as duration of the extreme return drift episode is of great interest for regulators, industry practitioners and academics. This is the goal of this paper.

The detection problem can be addressed from two distinct perspectives. The more common is the \textit{offline} approach where the researcher observes the full dataset, and then conducts a ``one-shot'' procedure to identify whether and when violations have occurred. The vast literature on ex-post detection of structural breaks in macroeconomic and financial time series data, including \citet{andrews1993tests}, \citet{bai1996testing}, \citet{bai1998estimating}, and \citet{elliott2014pre}, falls within this category. More recently, \citet{bucher2017nonparametric} develop inference procedures for a change point in the jump intensity parameter for a L\'evy process with high-frequency data following this approach.

A more challenging and, for practitioners, investors and regulators, arguably more relevant perspective is the real-time setting, where data arrive continuously, and one wishes to detect the violation in a timely manner. This objective has some resemblance to that of the recent \textit{now-casting} literature in macroeconomics, where the aim is to update the assessment of the current and future state of the economy as new data are received in real time. For early initial work on a formal statistical framework in this setting, see, e.g.,  \citet{evans2005we} and \citet{giannone2008nowcasting}, while corresponding work utilizing financial data can be found in, e.g., \citet{andreou2013should} and \citet{banbura2013now}. Another related macro-econometric literature is initiated by \citet{diba1988explosive} regarding the detection of macroeconomic bubbles and crises. This methodology is extended by \citet{phillips2011explosive} using a recursive procedure based on right-tailed unit root tests to detect and locate the origin and terminal dates for bubbles. A series of subsequent studies provide further theoretical modifications, e.g., \citet{phillips2011dating}, \citet{phillips2014specification}, and \citet{phillips2015testingb}, while empirical applications are provided by \citet{phillips2018financial} and \citet{phillips2015testinga}.

Our objective is closely aligned with the latter \textit{real-time} or \textit{online} detection procedures. From a technical perspective, our approach is rooted in the statistic literature on the sequential detection problem for a change point. This literature typically deals with the mean of i.i.d samples or with the drift within a continuous-time model featuring both a drift and a scaled standard Brownian motion component. The literature goes back to, at least, Abraham Wald's sequential analysis. His work inspired the widely-known CUSUM rule by \citet{page1954continuous} and subsequently the Bayesian rule developed by \citet{shiryaev1963optimum} and \citet{roberts1966comparison}.\footnote{For more studies and optimality results, see, e.g., \citet{moustakides1986optimal}, \citet{ritov1990decision}, \citet{shiryaev1996minimax}, and \citet{moustakides2004optimality}. See also \citet{srivastava1993comparison} for a comparison and \citet{figueroa2019change} for the case of a L\'evy process. On the financial econometrics side, \citet{andreou2006monitoring} accommodates the CUSUM test for strongly dependent financial time series data, and \citet{andreou2008quality} applies real-time detection procedures to the structural parameters of credit risk models by monitoring the associated Radon-Nikodym derivative process.} We follow the former rule, but also deviate substantially, because that statistic requires knowledge of the alternative measure after the change, which is not a natural assumption in our setting. Specifically, our detector is based on the generalized likelihood ratio (GLR) statistic, which profiles out the unknown alternative parameter. For studies on this GLR-CUSUM procedure under an i.i.d setting, see \citet{siegmund1995using}, \citet{pollak1975approximations}, \citet{lai1979nonlinear}, and \cite{lai1995sequential}, among others.

Our paper differs from the aforementioned work in two key aspects. First, we rely on infill asymptotics and exploit the feature of accessing an asymptotically increasing number of observations of the process locally. This helps us formalize the notion of rapid detection of local It\^o semimartingale violations. These deviations from the semimartingale dynamics are only local in nature, unlike the earlier sequential detection literature which deals with detection of a permanent change. Second, we abandon the use of size versus power to characterize the properties of our detection procedure because we, by necessity, must apply our tests sequentially. If a test with fixed critical value is applied sequentially on a constant flow of newly arriving data, the null will inevitably be rejected repeatedly, and the traditional type I error literally explodes (with probability approaching one). To address this issue, we follow the sequential detection literature to design and evaluate the performance of our procedure using alternative metrics: the \textit{average run length (ARL)} and \textit{false detection rate (FDR)} or, more comprehensively, an asymptotic probability \textit{bound on the false detection (BFD)} rate versus a corresponding \textit{bound on the detection delay (BDD)} (see \cite{lorden1971procedures} and \citet{lai1995sequential}). Specifically, for the null probability measure, when there is no It\^o semimartingale violation, ARL measures the expected sample size until the first false detection, while FDR measures the probability of a false detection within a given period (and we refer to this period as BFD). For the alternative, BDD provides a probability bound on the number of observations before we achieve successful detection following a violation.\footnote{The sequential detection literature typically measures the behavior under the alternative via the {\it Expected Detection Delay}, or EDD. Characterizing the behavior of the latter is more complicated within our high-frequency setting. Thus, we focus on the probability estimate BDD instead.} Consequently, one strives to develop a detector that, conditional on a reasonably large ARL/small FDR, achieves the smallest possible BDD. An alternative would be to develop a time-varying boundary function -- in contrast to a constant threshold -- to control test size uniformly, see, e.g., \citet{chu1996monitoring}. Unfortunately, in a real-time sequential setting, such procedures struggle with the detection of structural changes that arrive late within the period. There is work seeking to alleviate this drawback, e.g., \citet{leisch2000monitoring}, \citet{horvath2004monitoring}, \citet{aue2004delay}, \cite{aue2006change}, and \cite{horvath2007sequential}. However, these procedures still tend to generate a significant detection delay, and we do not pursue this direction here.

As noted, our objective is to design statistical devices that detect local It\^o semimartingale violations swiftly and reliably after their occurrence, using continually incoming high-frequency data. Towards this end, we propose GLR-CUSUM type detectors as stopping times based on an estimated Brownian motion component of the latent asset price, which we recover from high-frequency return data along with short-dated options. We first establish the accuracy of this Brownian motion estimator, exploiting the option-based spot volatility estimate of \citet{todorov2019nonparametric} to standardize the high-frequency returns. This approach avoids complications stemming from the inconsistency of standard volatility measurements from returns, caused by local It\^o semimartingale violations, and it exploits the efficiency offered by option data for recovering spot volatility. Next, following the asymptotic setting in the sequential detection literature, we show that our detector is approximately exponentially distributed under the null (Theorem~\ref{thm:GLRw-CUSUM_null}) and develop a bound for the BDD under the alternative (Theorem~\ref{thm:GLRw-CUSUM_alter}). The former result implies that the ARL is the mean and FDR a percentile of the distribution. In turn, the latter result implies that our detectors achieve immediate detection of It\^o semimartingale violations under infill asymptotics. A Monte Carlo experiment finds that our theoretical results provide good guidance for the finite-sample properties of the detectors under a realistically calibrated simulation setting.

Finally, we apply our detection methods empirically to one-minute S\&P 500 equity (SPX) index futures data to investigate the pervasiveness of the It\^o semimartingale violation phenomenon and to assess whether our detector provides timely alerts regarding the failures. We further extend the procedure to obtain an identification rule for the duration of the violation. It is defined as the union of the time intervals on which our GLR-CUSUM statistic exceeds the chosen threshold. With a threshold specification for our detector leading to about 6\% daily FDR, we find these violations to be common in the SPX data --- with a little more than 1,000 such violations across 3,500 days. That said, more than half of the violations last for less than $10$ minutes, and only a small proportion exceeds $20$ minutes. Visual inspection suggests that, in most instances, our procedure detects such incidents within just a few minutes of their occurrence.

The remainder of the paper is organized as follows. In Section~\ref{sec:setup} we describe the setting and formulate the problem. In Section~\ref{sec:theory} we introduce our real-time detectors and study their asymptotic properties. Next, we carry out a Monte Carlo study in Section~\ref{sec:simulation} to illustrate the finite-sample performances of the detectors, and we then use them in an empirical application in Section~\ref{sec:empirical}. Section~\ref{sec:conclusion} concludes.

\section{Setup} \label{sec:setup}

The observed log-price process, $\logs_t$, is defined on a filtered probability space $\left(\Omega, \sigfield, (\sigfield_t)_{t \geq 0}, \measure\right)$. We assume that it may be decomposed as,
\begin{align} \label{eqn:observedprice_logs}
\logs_t ~=~ \elogs_t \,+\, \pn_t \, ,
\end{align}
where $\elogs$ represents an underlying efficient and arbitrage-free log-price. In addition, the observed price is contaminated by the noise component $\pn$, which is the source of brief episodes characterized by extreme return persistence, such as the so-called gradual jumps and flash crashes. Sections \ref{sec:setup_scheme} and \ref{sec:setup_effp} introduce our fairly standard specification for the observation scheme and $\elogs$, while Section \ref{sec:setup_noise} provides a detailed account of the dynamics for the noise component $\pn$.

\subsection{The Observation Scheme} \label{sec:setup_scheme}
We assume the price is observed over a given time interval $[0, T]$ at equidistant times, $t_i = i\Delta_n$, for all $i = 0, 1, \dots, nT$, where $\Delta_n = 1/n$ is the increment length. Let the $i$-th high-frequency log-return be denoted by,
\begin{align}
\Delta^n_i \logs \,=\, \logs_{i\Delta_n} \,-\, \logs_{(i-1)\Delta_n}.
\end{align}
Without loss of generality, we normalize the trading day to unity. Hence, $T$ refers to the number of trading days, and $n = 1/\Delta_n$ (assumed integer) is the number of observations per (trading) day. The time span of the data, $T$, is assumed fixed throughout.

Following the standard infill asymptotic framework, we let the sampling frequency, $\Delta_n$, shrink to $0$ (or, equivalently, $n \to \infty$). This equidistant sampling scheme can readily be relaxed to a non-equidistant one, requiring $\max_{\{i\in{1,\dots,n T\}}}(t_i - t_{i-1}) \to 0$.

\subsection{The efficient log-price $\elogs$} \label{sec:setup_effp}
The efficient log-price process $\elogs$ is an It\^o semimartingale,
\begin{align} \label{eqn:Itoprocess_elogs}
\elogs_t ~=~ \elogs_0 \,+\, \int_0^t \drift_s \, ds \,+\, \int_0^t \sigma_s \, d\bm_s \,+\, \int_0^t\int_{\SR} \delta(s,x) \, \Prm(ds,dx) \, ,
\end{align}  
where the initial value $\elogs_0$ is $\filtr_0$-measurable, the drift $\drift_t$ takes value in $\SR$, $\bm = (\bm_t)_{t \geq 0}$ is a one-dimensional standard Wiener process, $\delta: \, \SR_{+}\times\SR\mapsto\SR$ \, is a predictable mapping, and $\Prm$ is a Poisson random measure on $\SR_{+}\times\SR$ with predictable compensator (or intensity measure) $\cmps(dt,dx)=dt\otimes\cmpsmsr(dx)$. 

We impose the following assumption on the efficient price process.

\smallskip
\begin{assumption} \label{assumption:Itoprocess}
There exists arbitrary small $\mathbb{T}>0$ such that for the process $\elogs$ in equation (\ref{eqn:Itoprocess_elogs}), we have, for $t\in[0,\mathbb{T}]$ and $\forall p\geq 1$,
\begin{equation}
\mathbb{E}_0|b_t| \,+\, \mathbb{E}_0\left(\int_{\mathbb{R}}|\delta(t,x)|\cmpsmsr(dx)\right) \,+\, \mathbb{E}_0|\sigma_t|^p ~<~ C_0(p),
\end{equation}
where $C_0(p)$ is $\mathcal{F}_0$-adapted random variable that depends on $p$, and further
\begin{equation}
\mathbb{E}_0|\sigma_t-\sigma_s|^2 ~ \leq ~ C_0 \, |t-s|,~~~~\textrm{for $s,t\in[0,\mathbb{T}]$},
\end{equation}
where $C_0$ is $\mathcal{F}_0$-adapted random variable.
\end{assumption}
\smallskip

A few comments about this assumption are warranted. First, since our testing procedure concerns the asset price dynamics during an It\^o semimartingale violation initiated close to time zero, the assumption focuses on the behavior of $X$ in the vicinity of $0$. Second, the first part of Assumption~\ref{assumption:Itoprocess} involves the existence of conditional moments, but since $\mathbb{T}>0$ can be arbitrary small, these moment conditions are relatively weak. Third, the jump part of $X$ is modeled as an integral against a Poisson measure and, therefore, we restrict attention only to finite variation jumps. Finally, the ``smoothness in expectation'' condition for $\sigma$ will be satisfied whenever $\sigma$ is an It\^o semimartingale, which is the standard way of modeling stochastic volatility. We note, however, that this condition will not hold for the class of rough stochastic volatility models.\footnote{For this class of models, the right-hand side of the second equation in Assumption~\ref{assumption:Itoprocess} will be replaced by $C_0 \, |t-s|^{2H}$, for $0<H<1/2$ capturing the degree of roughness of the volatility path.} 
 
\begin{remark}
We refer to the model (\ref{eqn:Itoprocess_elogs}) for the efficient asset price $X$, along with Assumption~\ref{assumption:Itoprocess}, as the standard It\^o semimartingale model since it embeds most specifications used in current existing work. We note, however, that there is a gap between a semimartingale, required to preclude the existence of arbitrage opportunities, and the standard It\^o semimartingale considered above. The gap includes models in which the semimartingale characteristics are not absolutely continuous in time as well as models with infinite variation jumps. We do not consider those in our analysis.  
\end{remark}
\begin{remark}
We can readily accommodate an empirically relevant extension of the standard It\^o semimartingale model to the case, where the efficient price path features discontinuities triggered by economic announcements at pre-specified points in time. Since the announcement times are known a priori, one may remove the associated price increments from the analysis and proceed as if $X$ follows a standard It\^o semimartingale. To minimize notational complexity, we do not formally introduce this extension. 
\end{remark}

\subsection{The persistent noise $\pn$} \label{sec:setup_noise}

Our main objective is to develop inference tools to detect episodic violations of the It\^o semimartingale assumption, which is almost universally imposed in standard asset pricing theory. From a practical perspective, the occurrences of gradual jumps or, in more extreme manifestations, flash crashes, are of particular interest because the associated drift burst in the returns can induce severe biases in the high-frequency measurement of the (efficient) return variation. We include such short-lived episodes of extreme return persistence in our setup through the \textit{Persistent Noise (PN)} model of \citet{andersen2021volatility}, which is designed explicitly to accommodate these types of empirically observed phenomena. The PN model is a stochastically extended version of the \textit{Drift Burst (DB)} model by \citet{christensen2020drift}, and the real-time detection procedures developed below applies for either specification. We initially consider a simplified setting with only a single episode involving semimartingale violations. 

The PN model initiates a violation episode at a random point, $\tau_n$, i.e., $H_t = 0$ for $t < \tau_n$ and in our analysis $\tau_n\downarrow 0$. At $\tau_n$, the efficient price may display a discrete jump, $\Delta X_{\tau_n} \ne 0$. If an efficient price jump occurs, it likely reflects the arrival of new information or unusual ongoing trading activity. To the extent this information is not common knowledge or not readily interpretable, a gap will emerge between the efficient price, reflecting rational processing of all relevant information, and the market price which is determined by the interaction of risk averse and merely partially informed agents.\footnote{For an in-depth discussion of this feature across a large cross-section of stocks, see \citet{AACH}.} In other words, the efficient price may deviate from the market price, implying that the noise component is active, $H_{\tau_n} \neq 0$. In fact, if the information is not immediately observed or processed by market participants, we have $H_{\tau_n} = \Delta H_{\tau_n} = - \Delta X_{\tau_n}$, implying $\Delta Y_{\tau_n} = \Delta X_{\tau_n} - \Delta X_{\tau_n} = 0$. That is, the price reaction is delayed. An intermediate case is where the news generates a partial response. For example, if $H_{\tau_n} = - \Delta X_{\tau_n}/2$, the price jump underreacts to the new information. Such events may trigger a period of intense information acquisition and price discovery, inducing rapid price adjustment. After initiation at $\tau_n$, the future path of the noise component is modeled through a specific functional form, $g(t), t \in [\tau_n,\overline{\tau}]$. Finally, we ensure that the PN event terminates at some random point, $\overline{\tau} > \tau_{n\,}$, so that we have $H(t) = 0$ for $t > \overline{\tau}$.

We are now in position to formally introduce the PN model.

\medskip
\noindent {\bf The Persistent Noise (PN) model}: For some non-negative sequence $\tau_n\downarrow 0$, we let,
\begin{align} \label{eqn:modelH_persistencenoise_I}
\pn_{t} ~=~  f(\Delta X_{\tau_n}, \eta_{\tau_n}) \, g\left(t\right), ~~ t\geq \tau_n,
\end{align}
where $\eta_{\tau_n}$ is an $\mathcal{F}_{\tau_n}$-adapted random variable, $f$ is a continuous and bounded function, and $g$ is given as, 
\begin{align} \label{eqn:modelH_persistencenoise_II}
g(s) \,=\, \left\{ 1 - \left(\frac{s-\tau_n}{\overline{\tau}-\tau_n}\right)^{\vartheta_{pn}} \right\} \, \indicator_{\{s\in[\tau_n,\overline{\tau}]\}}, ~~~~ \vartheta_{pn}\in(0,0.5),
\end{align}
for some $\mathcal{F}_{\tau_n}$-adapted random $\overline{\tau}>\tau_n$.
\smallskip

The strength of the local return drift in the vicinity of the PN initiation point, $\tau_n$, in equation (\ref{eqn:modelH_persistencenoise_II}) increases as $\vartheta_{pn}$ takes on lower values, while the duration of the event is controlled by the realization of $\overline{\tau}$. We also note that $g(s)$ is zero, indicating no It\^o semimartingale violation, until $\tau_n$. Finally, $\eta_{\tau_n}$ allows for a random response to the events triggering the permanent noise component, which may or may not be associated with a jump in the efficient price.

To illustrate the dynamics of the PN model, Figure~\ref{fig:pricepath} plots a simulated sample path for the efficient price $\elogs$ (blue) across the 6.5 hour trading day along with the associated path for the observed price $\logs$ (black). The efficient price path features a positive jump at $\tau_n = 0.25$ (at $97$ minute), but (informational) frictions prevent this from being recognized by market participants, so there is no instantaneous jump in the observed price. In terms of the model, we register a corresponding negative latent noise jump at $\tau_n$, namely $f(\Delta X_{\tau_n}, \eta_{\tau_n}) = - \Delta X_{\tau_n}$. This is a purely mechanical effect. If there is no observed price change, but the efficient price jumps, then the deviation between the two -- the noise term $\pn$ -- must offset the efficient price jump.

As noted, Figure~\ref{fig:pricepath} is designed to replicate the gradual jump phenomenon. It is evident that this PN episode also may reflect the recovery from a so-called flash crash that hits its nadir at $\tau_n$. Thus, one may replicate the typical flash crash pattern by combining this ``recovery phase'' with an inverted version of the PN path prior to $\tau_n$, see \citet{andersen2021volatility} for a more detailed discussion.

\begin{figure}[H] 
\hspace*{-0mm}
\centering
\includegraphics[width=6.5in]
{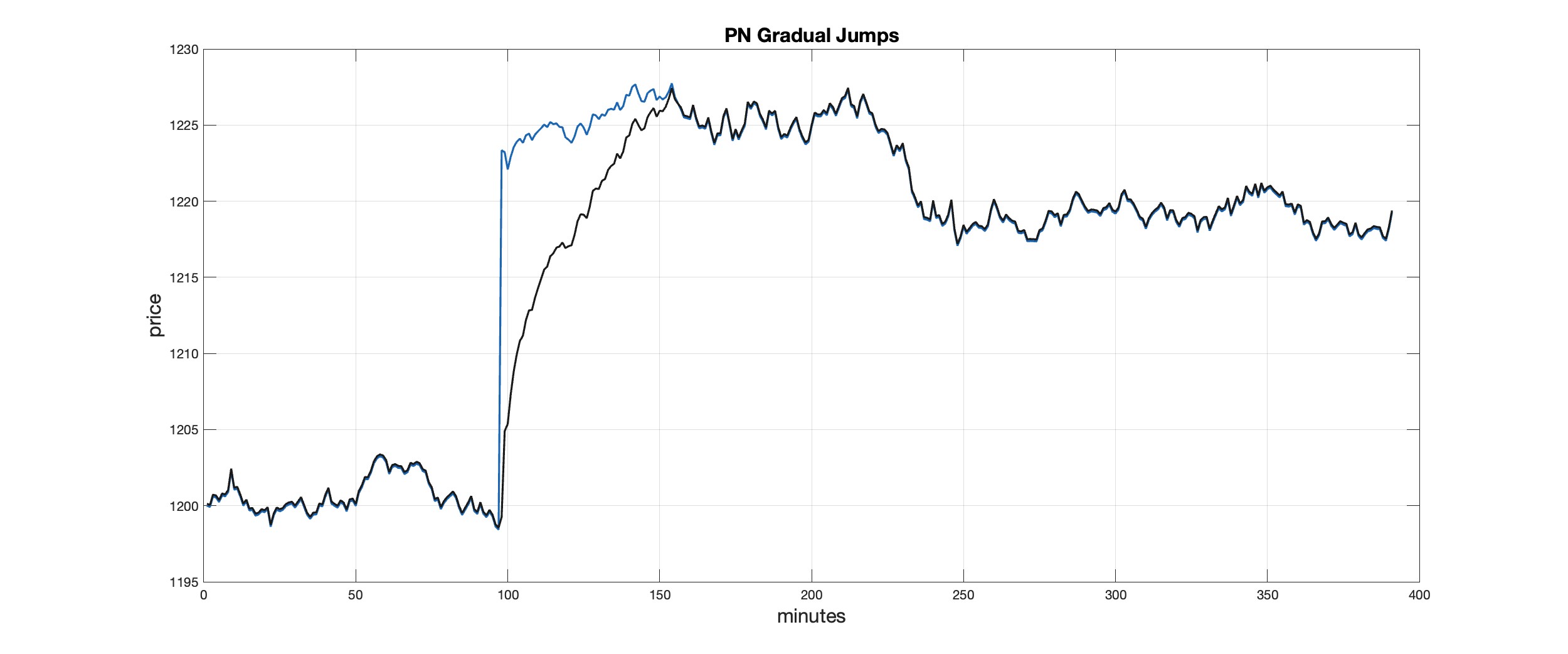}
\caption{Price path samples of the efficient price $\elogs$ (in blue) and observed price $\logs$ (in black) simulated by the Persistent Noise (PN) model.}
\label{fig:pricepath}
\end{figure}
\vspace{-0.1in}

From a technical perspective, it is convenient to introduce a version of the DB model by \citet{christensen2020drift}.\footnote{\citet{christensen2020drift} allow for simultaneous volatility and drift bursts. We do not include this feature, as we find no systematic evidence for significant elevation in option-based volatility measures at such times.}
\smallskip

\medskip
\noindent {\bf The Drift Burst (DB) model}: For some non-negative sequence $\tau_n\downarrow 0$, we have,
\begin{align} \label{eqn:modelH_driftburst}
\pn_t = \int_0^t c(s-\tau_n)^{-\vartheta}\indicator_{\{s\in[\tau_n,\overline{\tau}]\}} ds, ~~ \vartheta \in (0.5, 1),
\end{align}
for a constant $c$ and $\tau_n \leq \overline{\tau}$. $\tau_n$ and $\overline{\tau}$ have the same interpretation as in the PN model.
\smallskip

The DB model may be viewed as a differential version of the PN model by observing that $\int_{t_1}^{t_2}(s-\tau)^{-\vartheta}ds = (1-\vartheta)^{-1}(t_2-\tau)^{1-\vartheta} - (1-\vartheta)^{-1}(t_1-\tau)^{1-\vartheta}$ for $0\leq \tau \leq t_1<t_2$. Therefore, the asymptotic analysis will be identical for the PN model and the DB model with $\vartheta = 1-\vartheta_{pn}$. We will exploit this equivalence in our theoretical arguments.

Henceforth, we denote the probability and expectation by $\prob_{\tau_n}$ and $\E_{\tau_n}$ for the case where the observed log-price is generated as above, and by $\prob_{\infty}$ and $\E_{\infty}$ if there are no It\^o semimartingale violations (that is, when $\tau_n = \infty$).

\section{Real-Time Detection}
\label{sec:theory}

In this section, we propose GLR-CUSUM type detectors as stopping rules for the local It\^o semimartingale violations illustrated above. Next, we develop their theoretical properties. This involves characterizing the accuracy for our estimate of the Brownian motion driving the price, followed by approximation results for the distribution of the stopping rules, leading to analytic formulas for ARL and FDR, and finally a probability result for BDD which, in turn, implies a probability bound on the speed of detection.

\subsection{The GLR-CUSUM stopping rule}
Our local It\^o semimartingale violation detector is built upon a local estimator of the Brownian motion driving the price, defined as,
\begin{align} \label{eqn:bm_estimate_local}
\widehat{\bm}_{l_1,\,l_2}
~=~
\sum_{i = l_1+1}^{l_2} \,\frac{\Delta_i^n\logs}{~\hat{\sigma}_{(i-1) \, \Delta_n}~} \, \indicator_{\{|\Delta_i^n\logs|<\zeta\Delta_n^\varpi\}} \,,
\end{align}
for some $l_1,l_2\in\SN$ such that $0 \leq l_1 < l_2 \leq nT$, and $\hat{\sigma}_{i\Delta_n}$ is an option-based nonparametric spot volatility estimator defined later. As detection of the semimartingale violation inevitably is subject to some delay, the option-based $\hat{\sigma}$ enables us to avoid the bias in volatility estimation induced by the bursting drift following $\tau_n$ (see \cite{andersen2021volatility}).

Our CUSUM-type detector, employing the \textit{generalized likelihood ratio} (GLR) statistic, is now defined by the stopping rule,
\begin{align} \label{eqn:stoptime_GLRw}
\stoptime^{\rm w}(\cv) ~=~ \inf\left\{\, l > 0: \max_{l - w_n \leq k < l} \, \frac{|\widehat{\bm}_{k,l}|}{~\sqrt{(l-k)\, \Delta_n}~} ~>~ \cv \, \right\} \, ,
\end{align}
where $w_n$ controls the maximum window length for $\widehat{\bm}_{k,l}$, and $\cv$ is a chosen threshold.\footnote{The terminology of generalized likelihood ratio statistic here refers to (the square root of $2$ times) the likelihood of a Brownian motion with drift $\mu$ versus drift of $0$, $\mu \, \bm_{k,l} - \mu^2(l-k)\Delta_n/2$, with the unknown alternative $\mu$ being replaced by its maximum likelihood estimate $\widehat\mu = \bm_{k,l}/\sqrt{(l-k)\Delta_n}$.} In words, at time point $l$, we calculate the absolute value of the Brownian increment estimate $\widehat{\bm}_{k,l}$ over the interval $[k,l]$ (normalized by $1/\sqrt{(l-k)\Delta_n\,}$) for each point $k$ before $l$, and then take the maximum. If the maximum exceeds the threshold $\cv$, the detector reports an alarm, signaling detection of an It\^o semimartingale violation.

\subsection{Recovering the Driving Brownian Motion for the Price}
Our theoretical results concerning detection of local semimartingale violations hinge on the accuracy of the estimator $\widehat{\sigma}_{t}$. The following assumption is sufficient.

\smallskip
\begin{assumption}\label{assumption:sigma_est}
For some arbitrary small $\mathbb{T}>0$, we have $\inf_{t\in[0,\mathbb{T}]}\sigma_t>0$ and 
\begin{equation}
\mathbb{E}_{0\,} |\widehat{\sigma}_t - \sigma_t|^2 ~ \leq ~ C_0 \, \delta_n^2,~~~~t\in[0,\mathbb{T}],
\end{equation}
for some $\mathcal{F}_0$-adapted positive random variable $C_0$, and a deterministic sequence $\delta_n\rightarrow 0$.

In addition, we have $\widehat{\sigma}_t>C_0/l_n$, for $t\in[0,\mathbb{T}]$, some $\mathcal{F}_0$-adapted positive random variable $C_0$, and $l_n = \log(1/\delta_n)$.
\end{assumption} 
\smallskip

Several features of the above assumption are noteworthy. First, we impose conditions on the volatility estimator only in the vicinity of zero. Second, we require that $\sigma_t$ is non-vanishing in a neighborhood of zero. This is important for the behavior of the volatility estimator, but also for the overall validity of our detection procedure, as the diffusive component of $\elogs$ drives our asymptotic results. Third, the deterministic sequence $\delta_n$ captures the rate of convergence of the option-based volatility estimator. This rate is determined by the mesh of the strike grid of the available options as well as their tenor, see \cite{todorov2019nonparametric}. 

The above assumption is natural in the absence of a local semimartingale violation. Its validity is less clear if there is a violation of the type discussed in the previous section in the vicinity of $t=0$. However, note that a drift burst or persistent noise incident will not affect the norm of the conditional characteristic function of the price increment which forms the basis for the estimator of \cite{todorov2019nonparametric}. Hence, the consistency of our spot volatility estimator should not be affected by the local semimartingale violation. That said, the quality of the option-based estimator might worsen during such periods due to poor option data quality. To guard against this possibility, we impose some filters on the option data in our application.

To facilitate the statement of our next result, we introduce the following notation,
\begin{align*}
\widecheck{\bm}_{l_1,\,l_2} ~=~ \sum_{i = l_1+1}^{l_2} \Delta_i^n\bm,\end{align*}
for some $l_1,l_2\in\SN$ such that $0 \leq l_1 < l_2 \leq nT$. The issue in this section is how closely $\widehat{\bm}_{l_1,\,l_2}$ approximates $\widecheck{\bm}_{l_1,\,l_2}$. The key result is provided by the following proposition.

\smallskip
\begin{proposition} \label{prop:consistency_BM}
Suppose Assumptions~\ref{assumption:Itoprocess} and \ref{assumption:sigma_est} hold and $H\equiv 0$. Let $k_n\to\infty$ and $k_n\Delta_n\to 0$ as $\Delta_n \to 0$. Let $0\leq l_1^n<l_2^n$ be such that $l_1^n\Delta_n\rightarrow 0$ as $\Delta_n\rightarrow 0$ and, as before, $l_n = \log(1/\delta_n)$. Then, for a positive $\cv>0$, we have,
\begin{equation}
\begin{split}
&\mathbb{P}_{0}\left(\max_{|l_2^n-l_1^n|\leq k_n} \frac{~|\widehat{\bm}_{l_1^n,l_2^n} - \widecheck{\bm}_{l_1^n,l_2^n}|~}{\sqrt{(l_2^n-l_1^n) \, \Delta_n}} \, > \, \xi\right) \\ &~~~~~~~~~ \leq ~ C_0 \, k_n \, \left(\frac{~\sqrt{k_n} \, l_n \, \delta_n + \sqrt{k_n \, \Delta_n}~}{\xi} \,+\, \frac{k_n \, \Delta_n}{~1\wedge \xi^2~} \,+\, \Delta_n^{1-\varpi}\right) \, ,
\end{split}
\end{equation}
for some $\mathcal{F}_0$-adapted positive random variable $C_0$ that does not depend on $k_n$, $\Delta_n$ or $\xi$. 
\end{proposition}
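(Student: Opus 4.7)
The plan is to decompose $\widehat{\bm}_{l_1^n,l_2^n} - \widecheck{\bm}_{l_1^n,l_2^n}$ into four terms, bound each in expectation or L$^2$, apply Markov or Chebyshev, and then take a union bound over the $k_n$ admissible values of the endpoint. Since $H\equiv 0$ gives $\Delta_i^n\logs = \Delta_i^n\elogs$, and writing $\Delta_i^n\elogs = \int_{(i-1)\Delta_n}^{i\Delta_n} b_s\,ds + \int_{(i-1)\Delta_n}^{i\Delta_n}\sigma_s\,d\bm_s + \Delta_i^n\elogs^{J}$, I would set $m = l_2^n - l_1^n$ and decompose
\begin{align*}
\frac{\Delta_i^n\elogs\,\mathbbm{1}_{\{|\Delta_i^n\elogs|<\zeta\Delta_n^\varpi\}}}{\widehat{\sigma}_{(i-1)\Delta_n}} - \Delta_i^n\bm \;=\; A_i + B_i + C_i + D_i,
\end{align*}
where $A_i = \Delta_i^n\elogs\,\mathbbm{1}_{\{\cdot\}}\bigl(\widehat{\sigma}_{(i-1)\Delta_n}^{-1} - \sigma_{(i-1)\Delta_n}^{-1}\bigr)$ is the spot-volatility estimation error, $B_i = \sigma_{(i-1)\Delta_n}^{-1}\int_{(i-1)\Delta_n}^{i\Delta_n}b_s\,ds$ is the drift, $C_i = \sigma_{(i-1)\Delta_n}^{-1}\int_{(i-1)\Delta_n}^{i\Delta_n}(\sigma_s - \sigma_{(i-1)\Delta_n})\,d\bm_s$ is the local volatility-fluctuation martingale, and $D_i = \sigma_{(i-1)\Delta_n}^{-1}\bigl(\Delta_i^n\elogs^{J} - \Delta_i^n\elogs\,\mathbbm{1}_{\{|\Delta_i^n\elogs|\ge\zeta\Delta_n^\varpi\}}\bigr)$ collects the jump contribution and the truncation correction.

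For the first three pieces I would use the moment controls from Assumptions~\ref{assumption:Itoprocess}--\ref{assumption:sigma_est}. The bound $\widehat{\sigma}_t > C_0/l_n$, combined with $\mathbb{E}_0|\widehat{\sigma}_t - \sigma_t|^2 \le C_0\delta_n^2$ and Cauchy--Schwarz (using $\mathbb{E}_0\sum_i|\Delta_i^n\elogs|^2 \le C_0 m\Delta_n$), yields $\mathbb{E}_0|\sum_i A_i| \le C_0 l_n m\delta_n\sqrt{\Delta_n}$, so dividing by $\sqrt{m\Delta_n}$ and applying Markov produces a per-window contribution of order $l_n\sqrt{k_n}\delta_n/\xi$. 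The drift is handled deterministically by $|\sum_i B_i|\le C_0 m\Delta_n$, giving $\sqrt{k_n\Delta_n}/\xi$ after normalization and Markov. For the martingale $\sum_i C_i$, the It\^o isometry together with the L$^2$ smoothness of $\sigma$ in Assumption~\ref{assumption:Itoprocess} yields
\begin{align*}
\mathbb{E}_0\Bigl(\sum_i C_i\Bigr)^2 \;\le\; C_0 \sum_i \int_{(i-1)\Delta_n}^{i\Delta_n}\mathbb{E}_0|\sigma_s-\sigma_{(i-1)\Delta_n}|^2\,ds \;\le\; C_0\, m\Delta_n^2,
\end{align*}
so Chebyshev gives a term of order $k_n\Delta_n/\xi^2$; the factor $1\wedge\xi^2$ in the claim simply absorbs the trivial bound $P(\cdot)\le 1$ at small $\xi$. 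Union-bounding over the $k_n$ candidate values of $l_2^n$ produces the overall prefactor $k_n$.

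The residual $D_i$ is the main obstacle because the truncation indicator $\mathbbm{1}_{\{|\Delta_i^n\elogs|<\zeta\Delta_n^\varpi\}}$ is not $\mathcal{F}_{(i-1)\Delta_n}$-measurable, so no direct martingale argument is available. My workaround is a good-event reduction: define $\Omega_n$ as the event that, for every $i\in(l_1^n,l_2^n]$, one has $|\Delta_i^n\elogs^{c}|<\zeta\Delta_n^\varpi/2$, no jump of size exceeding $\zeta\Delta_n^\varpi/2$ occurs in the $i$-th bin, and the aggregate small-jump contribution to that bin also stays below $\zeta\Delta_n^\varpi/2$. Using the standard moment estimate $\mathbb{E}_0|\Delta_i^n\elogs^{c}|^p \le C_0\Delta_n^{p/2}$ (for $p$ large), the tail bound $\cmpsmsr(|x|\ge a)\le C_0/a$ implied by the L$^1$ jump control in Assumption~\ref{assumption:Itoprocess}, and Markov for the small-jump aggregate, I would bound $P(\Omega_n^c)$ by a per-interval contribution of order $\Delta_n^{1-\varpi}$ (the dominant piece being $\Delta_n\cdot\cmpsmsr(|x|\ge\zeta\Delta_n^\varpi/2)$), giving $P(\Omega_n^c)\le C_0 k_n\Delta_n^{1-\varpi}$. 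On $\Omega_n$ the truncation correction vanishes and the remaining small-jump piece of $D_i$ has $\mathbb{E}_0$ at most $O(k_n\Delta_n)$, which is absorbed by the drift-style bound $\sqrt{k_n\Delta_n}/\xi$ already established. Collecting all four pieces and applying the union bound over $l_2^n$ then yields the stated inequality.
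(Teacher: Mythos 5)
Your overall architecture---splitting the standardized-increment error into a spot-volatility estimation piece, a drift piece, a volatility-fluctuation martingale, and a jump/truncation piece handled by a good-event argument---parallels the paper's proof. The genuine gap is in how you make the bound uniform over windows. The maximum in the proposition, as the paper proves it and as it is invoked in the proof of Theorem~\ref{thm:GLRw-CUSUM_null} (there it is applied to $\max_{jm<l_1<l_2<(j+1)m}$), runs over \emph{both} endpoints, i.e.\ over all subwindows of length at most $k_n$, which is of order $k_n^2$ pairs. You union only over the $k_n$ candidate values of $l_2^n$ with $l_1^n$ fixed, so the required maximum is not covered; and repairing this by a brute-force union over all pairs multiplies your per-window Markov bounds (the volatility-error, drift and small-jump residual terms) by $k_n^2$ rather than $k_n$, overshooting the claimed inequality by a factor $k_n$. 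The paper avoids this by a per-increment reduction: if every single standardized increment discrepancy, divided by $\sqrt{\Delta_n}$, is below $\xi/\sqrt{k_n}$, then \emph{every} window of length at most $k_n$ satisfies the normalized bound $\xi$ (since $(l_2-l_1)/\sqrt{k_n}\leq\sqrt{l_2-l_1}$); a union over the $k_n$ increments then delivers both the prefactor $k_n$ and the $\sqrt{k_n}$ factors inside the bound. Your argument could be salvaged by uniform-in-window devices (bounding each window sum by the sum of absolute values over the whole block for the $A_i$, $B_i$ and jump pieces, and Doob's maximal inequality for $\sum_i C_i$), but as written this step is missing.

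A second, more technical problem is the moment claim $\mathbb{E}_0\sum_i|\Delta_i^n\elogs|^2\leq C_0\,m\Delta_n$ that you feed into Cauchy--Schwarz for the $A_i$ term. Assumption~\ref{assumption:Itoprocess} controls the drift and the jump compensator only in first (conditional) moment and imposes no second-moment condition on the jump size $\delta$, so $\mathbb{E}_0|\Delta_i^n\elogs|^2$ need not be of order $\Delta_n$, or even finite. The paper applies Cauchy--Schwarz only after separating off the drift and jump contributions, which are treated by first-moment Markov bounds (this is why its intermediate estimate for the volatility-estimation term reads $C_0(k_n\Delta_n+\sqrt{k_n}\,l_n\delta_n/\xi)$). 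You need an analogous split---or to exploit the truncation cap $\zeta\Delta_n^{\varpi}$, at the cost of a worse rate---before invoking Cauchy--Schwarz; the remaining ingredients of your proposal (the It\^o-isometry bound for $C_i$, the $\Delta_n^{1-\varpi}$ good-event bound for the truncation/jump piece) are in line with the paper's estimates.
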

\smallskip

\subsection{Average Run Length (ARL)}
We first analyze the behavior of the stopping time $\stoptime^{\rm w}(\cv)$ under the null measure, $\prob_{\infty}$, where there is no It\^o semimartingale violation, i.e., $\tau_n = \infty$. More specifically, our interest is in evaluating the ARL, formally defined as $\E_{\infty}[\, \stoptime^{\rm w}(\cv) \, ]$. We show in Theorem~\ref{thm:GLRw-CUSUM_null} below that $\stoptime^{\rm w}(\cv)$, under $\prob_{\infty}$, is approximately exponentially distributed, leading to an approximation result for ARL. Theorem~\ref{thm:GLRw-CUSUM_null} follows by adapting the theoretical framework in \citet{siegmund1995using}, based on the random fields analysis in \cite{siegmund1988approximate} to our setting (see also \citet{yao1993boundary}). We use the symbol ``$x_n \sim y_n$'' to denote ``$x_n$ is asymptotic equivalent to $y_n$'', that is, $\lim_{n\to\infty} x_n/y_n = 1$.

\smallskip
\begin{theorem} \label{thm:GLRw-CUSUM_null}
Denote the standard normal CDF and PDF by $\Phi$ and $\phi$. Let $w_n$ be such that $w_n \sim a_n \, \cv^2$, with $a_n$ being a deterministic sequence converging to $a\in (0,\infty]$. Suppose the conditions in Proposition~\ref{prop:consistency_BM} hold. Then, if $(l_n \, \delta_n + \sqrt{\Delta_{n\,}})/\, [\cv \, \phi(\cv)] \rightarrow 0$, as $\cv\to\infty$, $\stoptime^{\rm w}(\cv)$ will be asymptotically exponentially distributed with expectation,
\begin{align} \label{eqn:ARL_thy_GLRw}
\E_{\infty}[\, \stoptime^{\rm w}(\cv) \, ] ~ \sim \, \frac{1}{~ \ARLH_a \, \cv \, \phi(\cv)~ } \,\, .
\end{align}
Here $\ARLH_a$ is a positive constant depending on $a$ such that $\ARLH_a\to\ARLH$, as $a\to\infty$, where $\ARLH = \int_0^\infty x \, \nu^2(x) \, dx$ \, with \, $\nu(x) = 2x^{-2} \exp\left[-2\sum_{n=1}^{\infty}n^{-1} \, \Phi(-x\sqrt{n}/2)\right]$ \, for $x>0$.
\end{theorem}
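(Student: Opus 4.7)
The plan is to establish Theorem \ref{thm:GLRw-CUSUM_null} in two stages. First, I would use Proposition \ref{prop:consistency_BM} to replace the data-based statistic $\widehat{\bm}_{k,l}$ by the pure Brownian increment $\widecheck{\bm}_{k,l}$ inside the stopping rule, so that $\stoptime^{\rm w}(\cv)$ is asymptotically equivalent to the first passage time of a Gaussian functional. Second, I would invoke the Siegmund--Yao random-field boundary-crossing machinery of \citet{siegmund1988approximate} and \citet{siegmund1995using}, adapted to the window length $w_n \sim a_n \cv^2$, to obtain both the exponential limit law and the explicit rate $\ARLH_a \, \cv \, \phi(\cv)$.

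For the first stage I would fix a length scale $L_n \asymp 1/[\cv \, \phi(\cv)]$, i.e.\ the expected order of the ARL, and apply the bound in Proposition \ref{prop:consistency_BM} with $k_n = w_n$ to the $O(L_n)$ overlapping windows of length $w_n$ ending at points $l \leq L_n$. Because the statistic is a maximum of shifted copies of the same window process, a union bound reduces the control of the approximation error to showing that, with $\prob_\infty$-probability tending to one,
\begin{equation*}
\sup_{l \leq L_n} \, \max_{l-w_n \leq k < l} \, \frac{|\widehat{\bm}_{k,l} - \widecheck{\bm}_{k,l}|}{\sqrt{(l-k)\Delta_n}} \;=\; o\bigl(1/\cv\bigr),
\end{equation*}
which together with the Lipschitz behavior of $x \mapsto x\phi(x)$ near $\cv$ permits replacing $\cv$ by $\cv(1+o(1))$ in the Brownian statistic without changing the asymptotic mean of the first passage time. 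The precise condition $(l_n \, \delta_n + \sqrt{\Delta_n})/[\cv \, \phi(\cv)] \to 0$ is exactly what is needed to absorb the $\sqrt{k_n}\,l_n\,\delta_n + \sqrt{k_n\Delta_n}$ term in the proposition after summing over $L_n$ time points and taking $\xi$ of order $1/\cv$, while the jump-truncation residual $\Delta_n^{1-\varpi}$ is negligible on the same scale.

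For the second stage I would work directly with the Gaussian field $\{\widecheck{\bm}_{k,l}/\sqrt{(l-k)\Delta_n}: 0\le k < l\}$. Its supremum over a window of length $w_n$ is a maximum of highly correlated standard normals, and the crossing of the high level $\cv$ is a rare event on each time step. Standard high-level Poisson-clumping arguments, which in the Siegmund framework come from computing the expected number of upcrossings via Rice's formula applied to a normalised random walk with a moving starting point, give that the sequence of crossing epochs converges to a Poisson process whose intensity factorises as $\cv \, \phi(\cv) \cdot \ARLH_a(1+o(1))$. The constant $\ARLH_a$ is the Siegmund--Yao overshoot correction associated with window lengths in $[0, a_n\cv^2]$: for each $k$, the conditional first-passage probability contributes a factor $\nu(\cv/\sqrt{l-k}\cdot\sqrt{\Delta_n})$, and summing over the discrete $k$ with $w_n \sim a_n\cv^2$, then renormalising, yields precisely $\ARLH_a = \int_0^{a} x\,\nu^2(x)\,dx/\text{(const)}$; taking $a\to\infty$ recovers $\ARLH = \int_0^\infty x\,\nu^2(x)\,dx$. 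The exponential limit law of $\stoptime^{\rm w}(\cv)/\E_\infty[\stoptime^{\rm w}(\cv)]$ then follows from the Poisson approximation and yields \eqref{eqn:ARL_thy_GLRw}.

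The main obstacle is the first stage: propagating the window-wise control in Proposition \ref{prop:consistency_BM} into a bound that is uniform over $l \leq L_n$ with $L_n$ of exponential order $e^{\cv^2/2}/\cv$. A naive union bound over all $l$ would require the per-$l$ error to be exponentially small, which we do not have; instead one must exploit the fact that only crossings of a level near $\cv$ matter and that the approximation error acts as a multiplicative perturbation $\cv \mapsto \cv(1+o(1))$ on the Brownian statistic, so that its effect on the hitting probability is of order $\cv\phi(\cv)\cdot o(1)$ per step -- exactly compensating the scale $L_n$. Translating the Siegmund boundary-crossing calculations, which are stated for pure Brownian motion on a continuous parameter set, to the discretised and window-truncated setting here is bookkeeping but routine; the stochastic-volatility standardisation of the returns by $\widehat{\sigma}$ has already been absorbed in Proposition \ref{prop:consistency_BM}, so no further path-by-path analysis of $\sigma_t$ is required beyond Assumptions \ref{assumption:Itoprocess} and \ref{assumption:sigma_est}.
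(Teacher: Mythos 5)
Your proposal follows essentially the same route as the paper's proof: use Proposition~\ref{prop:consistency_BM} to trade $\widehat{\bm}$ for the pure Gaussian random walk $\widecheck{\bm}$, then apply Siegmund's random-field boundary-crossing approximation on blocks of length $m\asymp \cv^2$ together with a blocking/Poisson-clumping argument to obtain both the rate $\ARLH_a\,\cv\,\phi(\cv)$ and the exponential limit law. The one substantive point worth correcting is your ``main obstacle'': the uniform control over $l\leq L_n$ is in fact achieved by exactly the union bound you dismiss as naive. The paper splits each crossing event via a perturbed threshold, $\{|\widehat{\bm}_{k,l}|>\cv\sqrt{(l-k)\Delta_n}\}\subseteq\{|\widecheck{\bm}_{k,l}|>\cv(1-\iota_n)\sqrt{(l-k)\Delta_n}\}\cup\{|\widehat{\bm}_{k,l}-\widecheck{\bm}_{k,l}|\geq \cv\iota_n\sqrt{(l-k)\Delta_n}\}$ with $\iota_n\to 0$, $\cv\iota_n\to\infty$, and then union-bounds the residual event over all $M/m$ blocks using Proposition~\ref{prop:consistency_BM}; the total error is of order $M\,(l_n\delta_n+\sqrt{\Delta_n})/\iota_n$, and the hypothesis $(l_n\delta_n+\sqrt{\Delta_n})/[\cv\,\phi(\cv)]\to 0$ states precisely that the per-increment approximation error \emph{is} exponentially small in $\cv^2$, so the sum vanishes even against $M\asymp 1/(\cv\,\phi(\cv))$ --- there is no need for the more delicate ``perturbation of the hitting probability'' argument you sketch. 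Two smaller omissions: the paper must separately control the straddling maxima with $l_1$ and $l_2$ in adjacent blocks (handled by \citet[Lemmas~2--8]{siegmund1995using} or the random-field bound, depending on whether $w_n/\cv^2$ stays bounded), which your Poisson framing subsumes but does not make explicit; and the constant is $\ARLHc=\int_{\bar c^{-1/2}}^\infty x\,\nu^2(x)\,dx-\bar c^{-1}\int_{\bar c^{-1/2}}^\infty x^{-1}\nu^2(x)\,dx$ rather than a renormalised $\int_0^a x\,\nu^2(x)\,dx$, though both recover $\ARLH$ as $a\to\infty$.
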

\smallskip

The proof of Theorem~\ref{thm:GLRw-CUSUM_null} is provided in Appendix~\ref{appsec:proofs}. Following \citet{siegmund1995using}, it is useful for numerical evaluation to apply the approximation $\nu(x) = \exp(-\rho x) + o(x^2)$ for $x \to 0$, where the value of the constant $\rho$ is about $0.583$. Based on this, we can numerically determine $\ARLH$ to take the value $0.735$.

We desire for the ARL, under the null, to be as large as possible or, equivalently, $\cv$ to be as large as possible. The optimal such $\cv$, up to a log term and for a given $w_n$, will be $\cv$ such that $\phi(\cv) \sim (\delta_n \, l_n \vee \sqrt{\Delta_n})$. It is reasonable to assume that the size of the error in estimating spot volatility from options is much smaller than the high-frequency approximation error, that is, $\delta_n \, l_n/\sqrt{\Delta_n} \rightarrow 0$. Under this condition and the indicated choice of $\cv$, we obtain from Theorem~\ref{thm:GLRw-CUSUM_null} that,
\begin{equation}
\E_{\infty}[\, \stoptime^{\rm w}(\cv) \, ] ~ \sim \, \frac{1}{~\sqrt{\Delta_n}~} ~ \sqrt{\, \log\left(\frac{1}{\Delta_n}\right) \, } \, .
\end{equation}

We note that this ARL value can be obtained with $w_n$ (the window size) taking values within a wide range --- from an order of $\cv^2$ through using (almost) all available observations on a given day (retaining $w_n \, \Delta_n \rightarrow 0$). For the case of $w_n/ \, \cv^2 \rightarrow \infty$, we may denote our detector $\stoptime(\cv)$, and replace $\ARLH_a$ by $\ARLH$, as $\ARLH_a \to \ARLH$ with $a \to \infty$.\footnote{For the standard i.i.d.\ case, more specific guidelines in selecting $w_n$ can be found in \citet{lai1995sequential}.}

\subsection{False Detection Rate (FDR)}

Given the nature of our sequential detection procedure, we cannot rely on the regular notion of test size to control the rejection rate under the null hypothesis. The preceding subsection instead focuses on the concept of ARL, establishing this quantity (the ARL) as the mean value of the asymptotic distribution for stopping time $\stoptime^{\rm w}(\cv)$ under $\prob_{\infty}$. However, this distributional result has wider implications. For example, it allows us to control the false detection rate (FDR) as,
\begin{align} \label{eqn:FDR}
\prob_{\infty \,} [\,\stoptime^{\rm w}(\cv) \, < \, \ell_{n}\,] ~ \leq ~ \alpha \, ,
\end{align}
where $\ell_n$ indicates an asymptotically increasing sequence of observations, while $\alpha$ is the bound for the FDR -- the probability of triggering (at least) one false alarm after observing no more than $\ell_n$ price increments. If $\alpha$ is asymptotically shrinking, then we refer to $\ell_n$ as the \textit{bound (in probability) on false detection} or BFD. 

The FDR is often viewed as a more robust measure than ARL. Typically, the objective is to ensure an expected long duration (ARL) before a false alarm is triggered. The problem is that a stopping rule may feature a large ARL and still retain a high probability of false alarms within short periods. Although this does not apply in our case, we still proceed with a result concerning the limiting behavior of FDR under the null, since it provides a guideline for selecting a threshold in empirical applications. Moreover, it renders the Monte Carlo studies more computationally tractable, as we avoid simulating an excessively long sample to assess the ARL, especially when we experiment with large thresholds. Below, we use the symbol ``$x_n \lesssim y_n$'' to denote ``$x_n$ is asymptotic equivalent to or less than $y_n$'', that is, $\lim_{n\to\infty} x_n/y_n \leq 1$. 

\smallskip
\begin{corollary} \label{cor:GLR_FDR}
Suppose the conditions in Theorem~\ref{thm:GLRw-CUSUM_null} hold and let $\ell_n$ be such that $\ell_n \, \cv \, \phi(\cv) \to 0$, as $\cv\to\infty$. Then, we have,
\begin{align} \label{eqn:GLRw_FDR}
\prob_{\infty}[\, \stoptime^{\rm w}(\cv) < \ell_{n\,}] ~  \lesssim ~ \ell_n \, \ARLH_a \, \cv \, \phi(\cv) \, .
\end{align}
\end{corollary}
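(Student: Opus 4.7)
The plan is to derive this FDR bound directly from the asymptotic exponential distribution of $\stoptime^{\rm w}(\cv)$ established in Theorem~\ref{thm:GLRw-CUSUM_null}. Write $\lambda_n = \ARLH_a \, \cv \, \phi(\cv)$ for the rate of the limiting exponential, so that $\E_{\infty}[\stoptime^{\rm w}(\cv)] \sim 1/\lambda_n$ and the standing hypothesis $\ell_n \, \cv \, \phi(\cv) \to 0$ translates into $\lambda_n \, \ell_n \to 0$ as $\cv \to \infty$.

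First I would recast asymptotic exponentiality as the weak convergence $\lambda_n \, \stoptime^{\rm w}(\cv) \dto \mathrm{Exp}(1)$ under $\prob_{\infty}$. Since the limiting CDF $F(t) = 1 - e^{-t}$ is continuous on $[0,\infty)$, P\'olya's theorem upgrades convergence in distribution to uniform convergence of CDFs on compact sets, which in particular justifies evaluation at the moving point $t_n = \lambda_n \, \ell_n$ even though $t_n \to 0$. This yields
\begin{equation*}
\prob_{\infty}\!\left[\, \stoptime^{\rm w}(\cv) < \ell_n \,\right] \,=\, \prob_{\infty}\!\left[\, \lambda_n \, \stoptime^{\rm w}(\cv) < \lambda_n \, \ell_n \,\right] \,=\, \bigl(1 - e^{-\lambda_n \ell_n}\bigr)\bigl(1 + o(1)\bigr).
\end{equation*}
Applying the elementary inequality $1 - e^{-x} \leq x$ for $x \geq 0$ with $x = \lambda_n \, \ell_n$ then produces $\prob_{\infty}[\, \stoptime^{\rm w}(\cv) < \ell_n \,] \leq (1+o(1)) \, \lambda_n \, \ell_n$, which, upon substituting back $\lambda_n = \ARLH_a \, \cv \, \phi(\cv)$, is precisely the asserted $\lesssim$ bound.

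The main obstacle, and the step I would scrutinize most carefully, is the legitimacy of evaluating the limiting CDF at the moving point $\lambda_n \, \ell_n \to 0$. If Theorem~\ref{thm:GLRw-CUSUM_null} is read only as convergence of CDFs at each fixed continuity point, one must revisit its proof: the Siegmund--Yao arguments of \citet{siegmund1995using}, building on the random field computations of \cite{siegmund1988approximate}, in fact establish a Poisson-process approximation for threshold crossings of the GLR statistic on $[0, n T]$ with intensity $\lambda_n$. Under that stronger interpretation, the event $\{\stoptime^{\rm w}(\cv) < \ell_n\}$ corresponds to at least one crossing within the first $\ell_n$ steps, whose probability is asymptotically $1 - e^{-\lambda_n \ell_n} \leq \lambda_n \, \ell_n$, giving the claim without invoking P\'olya's theorem. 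Either route leads to the same final bound, and the remaining calculation is elementary.
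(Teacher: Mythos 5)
Your argument is essentially the paper's own proof: the paper likewise reads Theorem~\ref{thm:GLRw-CUSUM_null} as giving $\prob_{\infty}[\stoptime^{\rm w}(\cv) < \ell_n] \sim 1 - \exp(-\ell_n \, \ARLH_a \, \cv \, \phi(\cv))$ and then applies the inequality $1 - \exp(-x) \leq x$ for small $x$. Your additional care about evaluating the limit at the moving point $\lambda_n \ell_n \to 0$ (resolved via the Poisson-approximation reading of the Siegmund--Yao arguments) is a sound refinement of the same route, not a different one.
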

\smallskip

This corollary follows directly from Theorem~\ref{thm:GLRw-CUSUM_null}. Specifically, the approximate exponential distribution of $\stoptime^{\rm w}(\cv)$ indicates $\prob_{\infty}[\stoptime^{\rm w}(\cv) < \ell_n] \sim 1 - \exp(-\ell_n \, \ARLH_a \, \cv \, \phi(\cv))$, and the result then follows from the inequality $1 - \exp(-x) \leq x$ for $x$ close to $0$. A sequence $\ell_n$ satisfying the conditions of Corollary~\ref{cor:GLR_FDR} constitutes a BFD.

\subsection{Bound on Detection Delay (BDD)}

When an It\^o semimartingale violation occurs at some time $\tau_n\downarrow 0$, we wish to detect it as quickly as possible, using discrete high-frequency observations of $Y$ starting at time 0. We refer to a deterministic sequence $T_n\rightarrow\infty$ such that $\mathbb{P}\left( \, \stoptime^{\rm w}(\cv) \, > \, \tau_n/\Delta_n+T_n \, \right) ~\rightarrow~ 0$, when $H$ is nontrivial, as a \textit{bound (in probability) on detection delay} or BDD. We want $T_n$ as small as possible. Theorem~\ref{thm:GLRw-CUSUM_alter} below provides  a lower 
bound on BDD. 

\smallskip
\begin{theorem} \label{thm:GLRw-CUSUM_alter}
Suppose $b_s = b_0$, $\sigma_s = \sigma_0$ and $\Delta X_s = 0$, almost surely, for $s$ in a neighborhood of zero and that Assumption~\ref{assumption:sigma_est} holds. Let $H$ be given by the drift burst model in equation (\ref{eqn:modelH_driftburst}) for some $c\neq 0$. Let $w_n$ be such that $w_n \sim a_n \, \cv^2$, with $a_n$ being a deterministic sequence converging to $a\in (2/c^2,\infty]$. Finally, let $\varpi\in (0,1/2)$ and $\vartheta\in(1/2,1)$ as well as $\cv\Delta_n^{\iota}\rightarrow 0$ for any $\iota>0$. We then have,
\begin{equation}\label{eq:BDD}
\mathbb{P}\left( \, \stoptime^{\rm w}(\cv) \, > \, \tau_n/\Delta_n+T_n \, \right) ~\rightarrow~ 0 \, ,
\end{equation}
for any sequence $T_n\rightarrow\infty$ such that $T_n \, \Delta_n\rightarrow 0$,\, $w_n/T_n\rightarrow 0$, \, $T_n/\Delta_n^{(1/2-\vartheta)/\vartheta}\rightarrow\infty$ \, and $\sqrt{w_n} ~ l_n \, \delta_n \, \Delta_n^{\varpi-1/2} / \, (\cv \, T_n) \rightarrow 0$.
\end{theorem}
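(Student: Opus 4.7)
The plan is to show that with high probability one can exhibit a pair of grid indices $(k^{\star},l^{\star})$ satisfying $l^{\star}-w_n\leq k^{\star}<l^{\star}$ and $l^{\star}\leq \tau_n/\Delta_n+T_n$ such that $|\widehat{\bm}_{k^{\star},l^{\star}}|/\sqrt{(l^{\star}-k^{\star})\Delta_n}>\cv$. Since $\stoptime^{\rm w}(\cv)$ is by definition the first $l$ for which the inner maximum exceeds $\cv$, this implies $\stoptime^{\rm w}(\cv)\leq l^{\star}\leq\tau_n/\Delta_n+T_n$. Writing $N_\tau=\lfloor\tau_n/\Delta_n\rfloor$, the natural choice anchors the window at or just after the burst onset: take $k^{\star}=N_\tau+m_n$ and $l^{\star}=k^{\star}+\lceil w_n\rceil$, where $m_n\geq 0$ is a small integer cushion needed (when $\vartheta+\varpi>1$) to ensure the returns in $(k^{\star},l^{\star}]$ are not zeroed out by the indicator $\indicator_{\{|\Delta_i^n \logs|<\zeta\Delta_n^\varpi\}}$. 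The conditions $w_n/T_n\to 0$ and $T_n/\Delta_n^{(1/2-\vartheta)/\vartheta}\to\infty$ together guarantee $m_n+\lceil w_n\rceil\leq T_n$ asymptotically, so $l^{\star}\leq\tau_n/\Delta_n+T_n$.

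Once $(k^{\star},l^{\star})$ is fixed, decompose $\widehat{\bm}_{k^{\star},l^{\star}}=A_n+B_n+R_n+E_n$, where $A_n$ is the drift-burst contribution, $B_n$ the standardized Brownian part, $R_n$ the ordinary drift, and $E_n$ collects the error from replacing $\sigma_0$ by $\widehat{\sigma}$ together with any residual truncation correction. The crucial computation is
\begin{equation*}
A_n\;\approx\;\frac{1}{\sigma_0}\int_{t_{k^{\star}}}^{t_{l^{\star}}}c\,(s-\tau_n)^{-\vartheta}\,ds\;=\;\frac{c}{\sigma_0(1-\vartheta)}\bigl[(t_{l^{\star}}-\tau_n)^{1-\vartheta}-(t_{k^{\star}}-\tau_n)^{1-\vartheta}\bigr].
\end{equation*}
When $m_n\Delta_n$ is negligible relative to $w_n\Delta_n$ this reduces to $\frac{c}{\sigma_0(1-\vartheta)}(w_n\Delta_n)^{1-\vartheta}(1+o(1))$; dividing by $\sqrt{(l^{\star}-k^{\star})\Delta_n}$, the normalized signal takes the form $\frac{c\,a_n^{1/2-\vartheta}}{\sigma_0(1-\vartheta)}\cv^{1-2\vartheta}\Delta_n^{1/2-\vartheta}$. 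Since $\cv\Delta_n^\iota\to 0$ for every $\iota>0$ and $1/2-\vartheta<0$, the divergence of $\Delta_n^{1/2-\vartheta}$ dominates $\cv^{2\vartheta}$, so the signal-to-threshold ratio tends to infinity. The lower bound $a>2/c^2$ furnishes the explicit multiplicative slack over $\cv$ needed to absorb the $O_p(1)$ term $|B_n|/\sqrt{(l^{\star}-k^{\star})\Delta_n}$ (a standardized Brownian increment) and the $o_p(1)$ term $|R_n|/\sqrt{(l^{\star}-k^{\star})\Delta_n}=O_p(\sqrt{w_n\Delta_n})$.

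The principal technical obstacle is controlling $E_n$ in probability. By adapting the reasoning behind Proposition~\ref{prop:consistency_BM}---now with a nontrivial $H$ over the window---one decomposes the volatility-estimation error into the same kinds of pieces that appear in the null-case bound. The resulting contribution to $|E_n|/\sqrt{(l^{\star}-k^{\star})\Delta_n}$ is of order $l_n\delta_n/\sqrt{\Delta_n}$, and the hypothesis $\sqrt{w_n}\,l_n\,\delta_n\,\Delta_n^{\varpi-1/2}/(\cv T_n)\to 0$ is calibrated precisely so that this is $o_p(\cv)$. The most delicate subpart is verifying that, for an appropriate cushion $m_n\ll T_n$, the single-step burst drift $\int_{t_{i-1}}^{t_i}c(s-\tau_n)^{-\vartheta}\,ds$ on each interval with $i>k^{\star}$ stays below the truncation threshold $\zeta\Delta_n^\varpi$. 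The singularity of $(s-\tau_n)^{-\vartheta}$ at $\tau_n$ forces the cushion to be nontrivial when $\varpi>1-\vartheta$; the worst-case size of such a cushion, taken over $\varpi\in(0,1/2)$, is of order $\Delta_n^{(1/2-\vartheta)/\vartheta}$, which is exactly what the condition $T_n/\Delta_n^{(1/2-\vartheta)/\vartheta}\to\infty$ supplies room for. Combining these estimates yields $|\widehat{\bm}_{k^{\star},l^{\star}}|/\sqrt{(l^{\star}-k^{\star})\Delta_n}\geq |A_n|/\sqrt{(l^{\star}-k^{\star})\Delta_n}-o_p(\cv)>\cv$ with probability tending to one, which is \eqref{eq:BDD}.
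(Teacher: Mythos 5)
Your overall strategy --- exhibit one admissible pair $(k^{\star},l^{\star})$ at which the normalized statistic exceeds $\cv$ with probability tending to one --- is legitimate and genuinely different from the paper's route. The paper instead waits until the per-increment burst drift falls below the truncation level, replaces it by a constant lower bound $\mu_{M_n}=cM_n^{-\vartheta}\Delta_n^{1-\vartheta}$ (a ``milder'' alternative), and applies nonlinear renewal theory (\citet{lai1979nonlinear}, \citet{dragalin1994optimality}) to show the auxiliary stopping time has expected delay at most $2\cv^2/c^2$, concluding via $M_n\gg\cv^2$. The problem is that your central signal estimate fails in exactly the regime where the truncation cushion matters. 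You correctly observe that for $\varpi>1-\vartheta$ one must skip $m_n\asymp\Delta_n^{(1-\varpi-\vartheta)/\vartheta}$ increments (worst case $\Delta_n^{(1/2-\vartheta)/\vartheta}$) before the burst drift drops below $\zeta\Delta_n^{\varpi}$. But any such $m_n$ is polynomially large in $1/\Delta_n$, whereas $w_n\sim a_n\cv^2$ is sub-polynomial because $\cv\Delta_n^{\iota}\to0$ for every $\iota>0$. Hence $w_n/m_n\to0$, the premise that ``$m_n\Delta_n$ is negligible relative to $w_n\Delta_n$'' is false, and a mean-value expansion gives $(t_{l^{\star}}-\tau_n)^{1-\vartheta}-(t_{k^{\star}}-\tau_n)^{1-\vartheta}\approx(1-\vartheta)(m_n\Delta_n)^{-\vartheta}w_n\Delta_n$, so the normalized signal is of order $(m_n\Delta_n)^{-\vartheta}\sqrt{w_n\Delta_n}$. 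For the worst-case cushion this is a \emph{bounded} multiple of $\cv$ (roughly $c\sqrt{a}\,\cv/\sigma_0$), not the diverging quantity $\cv^{1-2\vartheta}\Delta_n^{1/2-\vartheta}$ you claim.

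Consequently the signal-to-threshold ratio need not tend to infinity, and the hypothesis $a>2/c^2$ is precisely what makes the limiting constant exceed one: the paper's renewal calculation shows the post-truncation detection delay is $\approx\cv^2/c^2$ increments, so the window must be longer than that. Treating $a>2/c^2$ as mere slack against an $O_p(1)$ Brownian term misidentifies the binding constraint. Your argument could be repaired by anchoring the window at the $\varpi$-specific boundary $m_n\asymp\Delta_n^{(1-\varpi-\vartheta)/\vartheta}$ (with an extra margin so the Gaussian part of the returns does not retrigger the truncation uniformly over the window), where the per-increment drift is still $\asymp\zeta\Delta_n^{\varpi}$ and the normalized signal is $\asymp\zeta\sqrt{w_n}\,\Delta_n^{\varpi-1/2}\gg\cv$; but that is a different computation from the one you wrote, and it must be accompanied by uniform control over the $w_n$ increments of both the truncation indicator and the volatility-estimation error (your bound for $E_n$ as written only delivers $o_p(\cv\,T_n)$ from the stated hypothesis, not $o_p(\cv)$).
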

\smallskip

We invoke a number of simplifying assumptions in the derivation of Theorem~\ref{thm:GLRw-CUSUM_alter}. Primarily, we assume that $X$ is continuous and features constant drift and constant diffusive volatility in a (shrinking) neighborhood of zero. This style of assumption is common in the high-frequency financial econometrics literature and --- as is typically the case --- it should be possible to relax at the cost of more complicated derivations.

Theorem~\ref{thm:GLRw-CUSUM_alter} is proved by constructing an alternative measure, say $\widetilde\prob$, under which the semimartingale violations are milder. Specifically, the expected value of the increments under $\widetilde\prob$ are smaller in asymptotic order of magnitude than under $\mathbb{P}$. Consequently, it takes longer to detect the change, and thus the BDD under $\widetilde\prob$ is an upper bound for that under $\mathbb{P}$.

The object characterized by the BDD condition (\ref{eq:BDD}) is a diverging sequence of observations, $(T_n)$, sufficiently large to ensure almost sure detection, in our infill asymptotic setting, of an It\^o semimartingale violation. It is natural to require that our detector identifies such violations before triggering a false alarm with probability approaching one or, equivalently, BDD $<$ BFD. From Theorem~\ref{thm:GLRw-CUSUM_alter}, we need $T_n/\Delta_n^{(1/2-\vartheta)/\vartheta}\rightarrow\infty$, while BFD, from Corollary~\ref{cor:GLR_FDR} and the discussion thereafter, is of asymptotic order $1/\, (\cv \, \phi(\cv))$. Thus, we must have $\cv \, \phi(\cv) \, \Delta_n^{(1/2-\vartheta)/\vartheta} \to \, 0$ to ensure BDD $<$ BFD. This condition may be satisfied for any $\vartheta<1$, provided $\cv$ is chosen as large as possible, while still guaranteeing that Theorem~\ref{thm:GLRw-CUSUM_null} applies (recall the discussion after Theorem~\ref{thm:GLRw-CUSUM_null}). 

The BDD embeds two sources of delay (as shown in our proof): (i) the truncation of price increments to guard against jumps; (ii) the requisite cumulation of non-truncated returns in our detection statistic.
Asymptotically, the first delay term dominates the second. Nonetheless, in finite samples, the BDD will be determined by both effects. 

\smallskip
\begin{remark}
We can also introduce a double-window-limited GLR rule, defined as,
\begin{align} \label{eqn:stoptime_GLRww}
\stoptime^{\rm ww}(\cv) := \inf\left\{l > 0: \max_{l - w_n - r_n \leq k < l - r_n}\frac{|\widehat{\bm}_{k,l}|}{\sqrt{(l-k)\Delta_n}} > \cv \right\},
\end{align}
where $r_n$ imposes a restriction on the minimum span for the statistic $\widehat{\bm}_{k,l}$. This helps reduce the false detection error under the null induced by extreme values for the statistic $\widehat{\bm}_{k,l}$ due to only a few ($\, < r_n$) increments. Consequently, this will generate a longer ARL/smaller FDR under the null, and a slightly larger BDD under the alternative. We do not analyze $\stoptime^{\rm ww}(\cv)$ theoretically, but evaluate its performance numerically via simulation. The Monte Carlo study in Section~\ref{sec:simulation} shows that setting $r_n$ to, say, $5$, we obtain effective protection against outliers without affecting the detection delay by much.

We note that the literature on testing for bubbles in macroeconomic settings often imposes a similar  minimum duration restriction on the bubble period (see, e.g., \citet{phillips2011explosive}, \citet{phillips2011dating}, and \citet{phillips2015testingb}).
\end{remark}

\section{Simulation Study} \label{sec:simulation}

In this section we carry out a simulation study to assess the finite-sample performance of our procedure for swift detection of local It\^o semimartingale violations.

\subsection{Simulation setting} \label{subsec:simu_setting}

To generate sample paths for the efficient log-price $\elogs$, we simulate the following Heston type stochastic volatility model with jumps, 
\begin{align}
\dd\elogs_t    &~=~ \drift_t \, \dd t \,+\, \sigma_t \, \dd\bm_{1,t} \,+\, \dd\jump_t \, , \\
\dd\sigma_t^2 &~=~ \cs \, (\lrv - \sigma_t^{2\,}) \, \dd t \,+\, \vov \, \sigma_t \, \bm_{2,t}
\end{align}
where $\bm_{1}$ and $\bm_{2}$ are standard Brownian motions with correlation $\E(\dd\bm_{1,t}, \dd\bm_{2,t}) = \rho \, \dd t$, and $\jump_t$ denotes the jump term for which we employ a Poisson process with intensity $p_\elogs$ and jump size distribution $\mathcal{N}(0, \lambda_\elogs^2)$\,.

In terms of parameter specification, we set the initial efficient log-price $\elogs_0$ to $\log(1200)$, the drift term $\drift_t$ to zero, and the unit of time to one trading day. The volatility process $\sigma_t^2$ is initiated at its unconditional mean of $\lrv$ on day one while, for other days, it is initiated at the ending value of the previous day. The annualized parameter vector for the Heston model is set to $(\cs, \lrv, \vov, \rho) = (5, 0.0225, 0.4, -\sqrt{0.5})$. For the jump components, we let $p_\elogs = 3/5$, corresponding to 3 jump per week on average, and $\lambda_\elogs = 0.5\%$. 

For the It\^o semimartingale violation term $\pn$, we employ the DB and PN models (\ref{eqn:modelH_driftburst}) and (\ref{eqn:modelH_persistencenoise_I})--(\ref{eqn:modelH_persistencenoise_II}), respectively, to generate gradual-jump type patterns. Specifically, for the DB model, we set $c = 3$, $\tau_n = 0.25$ and $\overline\tau = 0.4$, and explore $\vartheta \in \{0.55, 0.65, 0.75\}$. For the PN model, we add a jump in $\elogs$ at $\tau_n = 0.25$ each day. We let $f(\Delta\elogs_\tau, \eta_\tau) = \eta_\tau \, \Delta\elogs_\tau$ with $\eta_\tau = -1$, $\overline{\tau} = 0.4$, and explore $\vartheta_{pn} \in \{0.45, 0.35, 0.25\}$, corresponding to $\Delta\elogs_\tau = \{1.4\%, 2.0\%, 3.0\%\}$. Finally, $\widehat\sigma_t = \sigma_t\times(1 + 0.02 \times Z)$, where $Z$ is a standard normal random variable, serves as proxy for the (noisy) option-based volatility estimate.

\subsection{Simulation results} \label{subsec:simu_results}

We first present results for the null measure $\prob_\infty$, i.e., no It\^o semimartingale violations.

\begin{table}[!htb] 
	\centering
	\small
	\caption{Average Run Length (ARL)}
\begin{tabu}{c | c c | c c | c}
\hline\hline
& \multicolumn{2}{c}{$\stoptime(\cv)$} & \multicolumn{2}{c}{$\stoptime^{\rm w}(\cv)$} & $\stoptime^{\rm ww}(\cv)$ \\
\hline
Threshold $\cv$ & Theory & Simulation & Theory & Simulation & Simulation  \\ 
\hline 
3.4 & ~358 & ~343 & ~393 & ~367 & ~684 \\ 
3.5 & ~487 & ~475 & ~536 & ~509 & ~939 \\  
3.6 & ~669 & ~673 & ~740 & ~718 & 1298 \\
3.7 & ~931 & ~956 & 1033 & 1014 & 1792 \\    
3.8 & 1310 & 1312 & 1457 & 1432 & 2456 \\   
3.9 & 1864 & 1881 & 2082 & 2058 & 3356 \\  
4.0 & 2682 & 2609 & 3007 & 2897 & 4406 \\      
\bottomrule
\end{tabu}
\caption*{The ARL is based on simulated $1$-minute data for 1,000 replications. We choose the window size $w_n = 30$ minutes for $\stoptime^{\rm w}(\cv)$ and $(w_n, r_n) = (30, 5)$ minutes for $\stoptime^{\rm ww}(\cv)$. We set the jump truncation threshold $\zeta = 4\,\widehat{\sigma}_{t-1}^{\rm \, med}$ for all cases.}
\label{table:ARL}
\end{table}

In Table~\ref{table:ARL}, we report the ARL values for our proposed detectors $\stoptime(\cv)$ (with $w_n = \infty$), $\stoptime^{\rm w}(\cv)$ ($w_n = 30$ minutes), and $\stoptime^{\rm ww}(\cv)$ ($(w_n, r_n) = (30, 5)$ minutes) based on our simulation setting. We also provide the theoretical ARL values for the former two detectors based on Theorem~\ref{thm:GLRw-CUSUM_null}. Comparing the first two and the next two columns we find that, for a wide range of threshold values ranging from $3.4$ to $4.0$, the Monte Carlo values for $\stoptime(\cv)$ and $\stoptime^{\rm w}(\cv)$ are close to their corresponding theoretical values, indicating that the asymptotic approximation (\ref{eqn:ARL_thy_GLRw}) captures the finite-sample performance well. When comparing the ARLs of $\stoptime(\cv)$ to those of $\stoptime^{\rm w}(\cv)$, we find that the latter are slightly larger than the former. This indicates that the window limit $w_n$ does not induce any major deterioration under the null, even with the relatively small value (here, $30$ minutes). On the contrary, in the last column, we find the ARL values for $\stoptime^{\rm ww}(\cv)$ to be significantly larger than the previous ones within the same settings. Evidently, the window limit $r_n$ has a significant impact under the null measure by ignoring false detections induced by just a couple of increments. Thus, there is room to experiment along this dimension in order to improve the FDR, although it is critical to also monitor the associated increase in the detection delay under the alternative.

\begin{table}[!htb] 
	\centering
	\small
	\caption{False Detection Rate (FDR)}
\begin{tabu}{c | c c | c c | c}
\hline\hline
& \multicolumn{2}{c}{$\stoptime(\cv)$} & \multicolumn{2}{c}{$\stoptime^{\rm w}(\cv)$} & $\stoptime^{\rm ww}(\cv)$ \\
\hline
Threshold $\mathbf{\cv}$ & Theory & Simulation & Theory & Simulation & Simulation  \\ 
\hline 
3.5 & 0.5227 & 0.5111 & 0.4889 & 0.4880 & 0.3046 \\   
3.6 & 0.4160 & 0.4079 & 0.3854 & 0.3868 & 0.2383 \\   
3.7 & 0.3207 & 0.3088 & 0.2943 & 0.2903 & 0.1777 \\   
3.8 & 0.2403 & 0.2311 & 0.2189 & 0.2143 & 0.1319 \\   
3.9 & 0.1756 & 0.1667 & 0.1588 & 0.1508 & 0.0923 \\   
4.0 & 0.1256 & 0.1180 & 0.1128 & 0.1066 & 0.0666 \\   
4.1 & 0.0881 & 0.0847 & 0.0787 & 0.0775 & 0.0499 \\   
4.2 & 0.0608 & 0.0593 & 0.0540 & 0.0541 & 0.0364 \\   
4.3 & 0.0413 & 0.0395 & 0.0365 & 0.0355 & 0.0249 \\   
4.4 & 0.0276 & 0.0265 & 0.0243 & 0.0233 & 0.0170 \\   
4.5 & 0.0183 & 0.0180 & 0.0160 & 0.0157 & 0.0118 \\
          \bottomrule
\end{tabu}
\caption*{FDR results based on simulated $1$-minute data with $\ell_n = 390$ (a day) for 10,000 replications. We choose the window size $w_n = 30$ minutes for $\stoptime^{\rm w}(\cv)$ and $(w_n, r_n) = (30, 5)$ minutes for $\stoptime^{\rm ww}(\cv)$. We set the jump truncation threshold $\zeta = 4\,\widehat{\sigma}_{t-1}^{\rm \, med}$.}
\label{table:FDR}
\end{table}

The above findings are corroborated by the FDR results in Table~\ref{table:FDR}, where we use a wider range of the threshold values, taking advantage of the fact that FDR can be assessed through simulations over deliberately chosen shorter time intervals. In particular, for each replication, we simulate one-minute prices for a day, and obtain the FDR as the fraction of replications with (at least) one false detection across our 5,000 replica. In general, the Monte Carlo FDRs are close to their theoretical values for $\stoptime(\cv)$ and $\stoptime^{\rm w}(\cv)$ by Corollary~\ref{cor:GLR_FDR}, implying accurate analytic approximations. The window limit $w_n$ does not impact the FDRs by much, while a moderate choice for $r_n$ may significantly reduce the false detection rate under the null. Finally, we note that this table can be used as a guide for choosing $\cv$, as well as $w_n$ and/or $r_n$, in empirical applications, perhaps following further simulations for alternative asset price dynamics.

\begin{figure}[!htb] 
\hspace*{-0mm}
\centering
\includegraphics[width=3.2in]{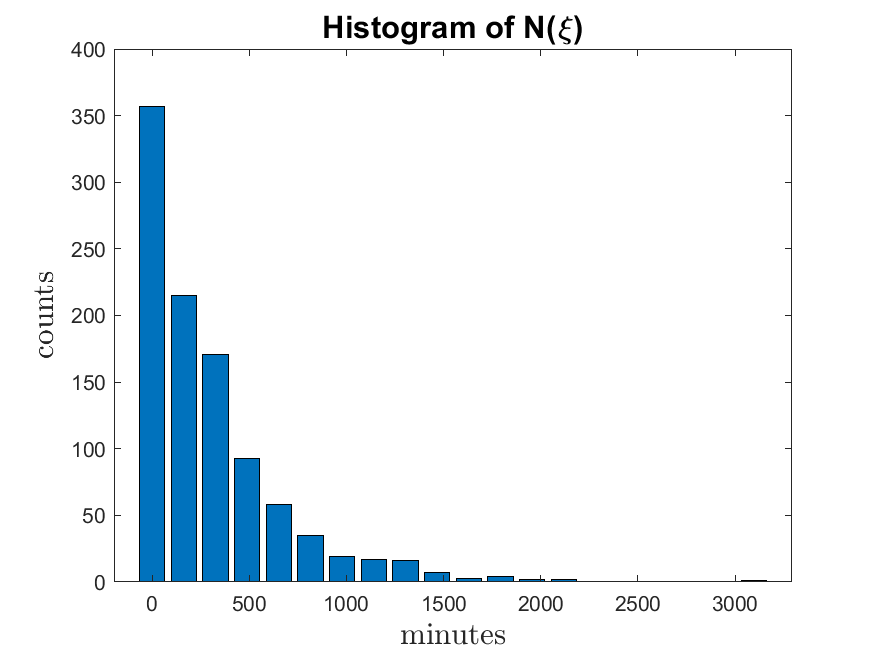} 
\includegraphics[width=3.2in]{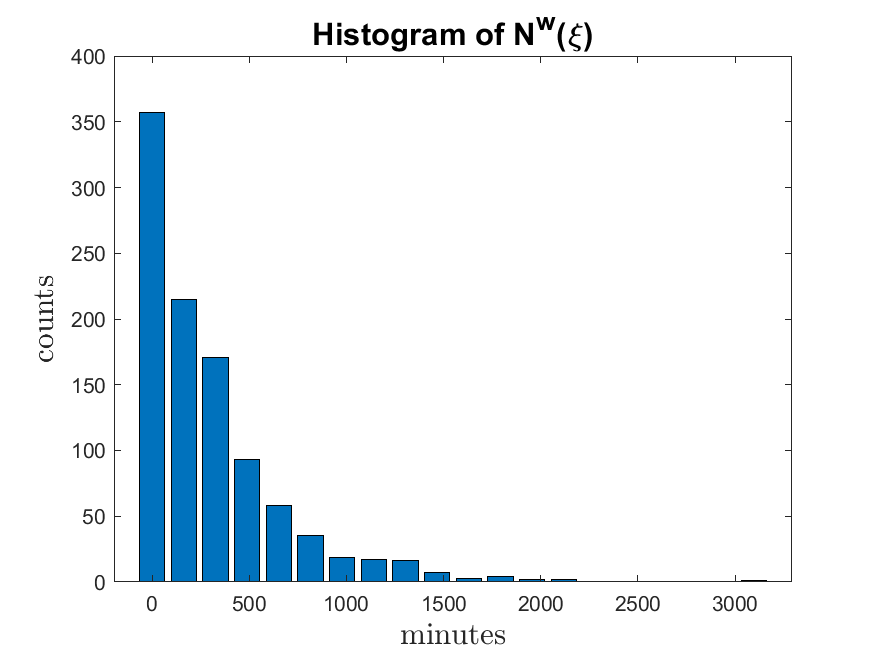} 
\caption{Histograms for $\stoptime(\cv)$ and $\stoptime^{\rm w}(\cv)$ based on simulated $1$-minute data and 1,000 replications. The window size is $w_n = 30$ minutes for $\stoptime^{\rm w}(\cv)$. The jump truncation parameter is $\zeta = 4\,\widehat{\sigma}_{t-1}^{\rm \, med}$ and the threshold is $\cv = 3.4$ for both cases.}
\label{fig:ARL}
\end{figure}

We round off our simulation study for the null measure $\prob_{\infty}$ with Figure~\ref{fig:ARL}. It displays histograms for the duration until (false) alarm associated with our detectors $\stoptime(\cv)$ ($w_n = \infty$) and $\stoptime^{\rm w}(\cv)$ ($w_n = 30$ minutes) based on 1,000 replications. The plots visually corroborate the approximate exponential distribution results in Theorem~\ref{thm:GLRw-CUSUM_null}.

We now turn to $\prob_{\tau_n}$ --- the probability measure with local It\^o semimartingale violation at time $\tau_n$ under the setting of Section~\ref{subsec:simu_setting}. Table~\ref{table:EDD} reports the (true) detection rates (denoted $\mathbf{r}$), i.e., what fraction of the 1,000 replications contain (at least) one detection, and the estimated \textit{expected detection delay} or EDD defined as,
\begin{align} \label{eqn:EDD}
\E_{\tau_n\,} [\, \stoptime^{\rm w}(\cv)-\tau_n \,|\, \stoptime^{\rm w}(\cv)>\tau_n \,] \, .
\end{align}
The EDD is regarded as the analogue of the ARL for the alternative measure $\prob_{\tau_n}$. In fact, ARL versus EDD is the canonical criteria in the sequential detection literature (see, e.g., \cite{lorden1971procedures}), as size versus power for hypothesis testing. Unfortunately, an explicit approximation for the EDD is not available in our case, and we can only derive BDD, which is a bound in probability on $\stoptime^{\rm w}(\cv)$. Reporting EDD in the simulation, however, makes it easier to assess the size of $\stoptime^{\rm w}(\cv)$ under the alternative.

\begin{table}[!htb] 
	\centering
	\small
	\caption{Detection Rate ($\mathbf{r}$) and Expected Detection Delay (EDD)}
\begin{tabu}{l|c c c|c c c}
          \hline\hline
 $\mathbf{r}$/{\bf EDD} & \multicolumn{3}{c}{$\stoptime^{\rm w}(\cv)$} & \multicolumn{3}{c}{$\stoptime^{\rm ww}(\cv)$} \\
          \hline
 ~ & 10s & 30s & 60s & 10s & 30s & 60s  \\
	      \midrule  
\multicolumn{3}{l}{{\bf \textit{DB gradual jumps}}}  \\
0.55 & 0.853/30.37 & 0.772/13.14 & 0.744/~8.12 
     & 0.870/28.48 & 0.703/13.90 & 0.514/11.75  \\ 
0.65 & 0.948/13.28 & 0.948/~8.77 & 0.932/~6.03 
     & 0.948/14.06 & 0.887/~9.36 & 0.756/~8.69  \\ 
0.75 & 1.000/~7.41 & 0.999/~5.83 & 0.989/~5.14
     & 0.996/~9.00 & 0.972/~8.09 & 0.977/~6.72  \\ 
\multicolumn{3}{l}{\bf {\textit{PN gradual jumps}}}  \\
0.45 & 0.943/14.24 & 0.819/12.24 & 0.645/~9.16  
     & 0.938/14.86 & 0.800/13.50 & 0.621/10.65  \\
0.35 & 0.969/10.86 & 0.913/~9.28 & 0.778/~7.61  
     & 0.966/11.46 & 0.909/10.42 & 0.773/~9.03  \\
0.25 & 0.997/~7.47 & 0.962/~7.52 & 0.977/~5.49
     & 0.997/~8.75 & 0.958/~8.98 & 0.977/~6.80  \\
          \bottomrule
\end{tabu}
\caption*{Detection rate ($\mathbf{r}$) and expected detection delay (EDD) results based on simulated $10$-, $30$- and $60$-seconds data for $1000$ days. We set $w_n = 30$ minutes for $\stoptime^{\rm w}(\cv)$ and $(w_n, r_n) = (30, 5)$ minutes for $\stoptime^{\rm ww}(\cv)$, with jump truncation threshold $\zeta = 4\,\widehat{\sigma}_{t-1}^{\, \rm med}$. \vspace{-0.5in}}
\label{table:EDD}
\end{table}

We consider both DB and PN gradual jump patterns as specified above. In both cases, we explore our detectors $\stoptime^{\rm w}(\cv)$ with $w_n = 30$ minutes and $\stoptime^{\rm ww}(\cv)$ with $(w_n, r_n) = (30, 5)$ minutes under three sampling frequencies: $10$-seconds, $30$-seconds and $60$-seconds.

Both detectors provide satisfactory performance, delivering high detection rates and short detection delays. Specifically, comparing vertically within each panel, we see that $\mathbf{r}$ increases and EDD shrinks, as the violations grow more severe. Likewise, comparing horizontally within each panel, we find the detectors improving with higher sampling frequency --- generating larger $\mathbf{r}$'s and shorter detection delays (in seconds, obtained by multiplying the entries accordingly with $10$, $30$ and $60$). 
Experiments featuring even higher sampling frequencies, in turn, confirm our (asymptotic) immediate detection result. However, implementation at very high frequencies within an actual market setting is impacted by microstructure noise, so it is arguably less practically applicable unless one employs a scheme to explicitly account for ultra high-frequency frictions.

\section{Empirical Application} \label{sec:empirical}

\subsection{Data}
We apply our detectors to one-minute S\&P 500 equity (SPX) index futures data covering January 2007 - December 2020. We only use data for the regular trading hours and eliminate days with reduced trading hours, producing a sample of $3,524$ days. Following our theoretical framework, we use the nonparametric option-based volatility estimator, SV, proposed by \citet{todorov2019nonparametric}. We rescale SV using the previous day's truncated volatility (TV) so that they are at the same level.

\subsection{Detection of semimartingale violations for S\&P 500}
We apply $\stoptime^{\rm ww}(\cv)$ with $(w_n,r_n) = (30,5)$ minutes and $\cv = 4$ for our one-minute data each day, implying we initiate the procedure $30$ minutes after the market open. Given this design, we lose power in terms of detecting violation periods shorter than $5$ minutes. From Table~\ref{table:FDR}, we see that this detector has a probability of about $6\%$ to trigger a false alarm at the daily level.

We further equip our detection with an \textit{ad hoc} identification rule to pin down It\^o semimartingale violation regions. Specifically, we define it as the union of all (probably overlapping) intervals, $[\,l_a, l_b\,]$, such that
\begin{align}
[\,l_a, l_b\,] = \mathop{\rm arg~max}\limits_{0 \leq l_a - w_n - r_n < l_b - r_n} \frac{|\widehat{\bm}_{l_a,l_b}|}{\sqrt{(l_b-l_a)\Delta_n}} > \cv.
\end{align}
That is, in words, we define the region as the union of the span of GLR-CUSUM statistics that exceed the chosen threshold $\cv$. 

Figures~\ref{fig:violation_eg1} and \ref{fig:violation_eg2} provide illustrations for two trading days, August 7, 2007 and August 30, 2019, with detected It\^o semimartingale violations. The price paths are red, $\stoptime^{\rm ww}(\cv)$ rejections are indicated by dark grey vertical lines, and the violation regions by light grey areas. In the bottom panel, we also provide $30$-minutes rolling window TV in blue and the option-based SV (rescaled by the same day's TV average) in orange.

On August 7, 2007, the price displays a gradual upward trend, until an abrupt 20 point crash over 2:15--2:30 pm, followed by a rapid recovery, taking the price beyond the original level. Our detector swiftly signals a violation as the flash-crash pattern emerges, triggering an alarm within a few minutes. As shown in the bottom plot, this dramatic price pattern induces an explosion in the local realized volatility measure, which is not consistent with our option-based SV measure, confirming the potential for dramatic biases in return-based volatility measurement. Similar findings hold for our second example on August 30, 2019, where a gradual jump initiates before 11:00 am. Again, our detector captures it expeditiously.

\begin{figure}[H] 
\hspace*{-0mm}
\centering
\includegraphics[width=6in]{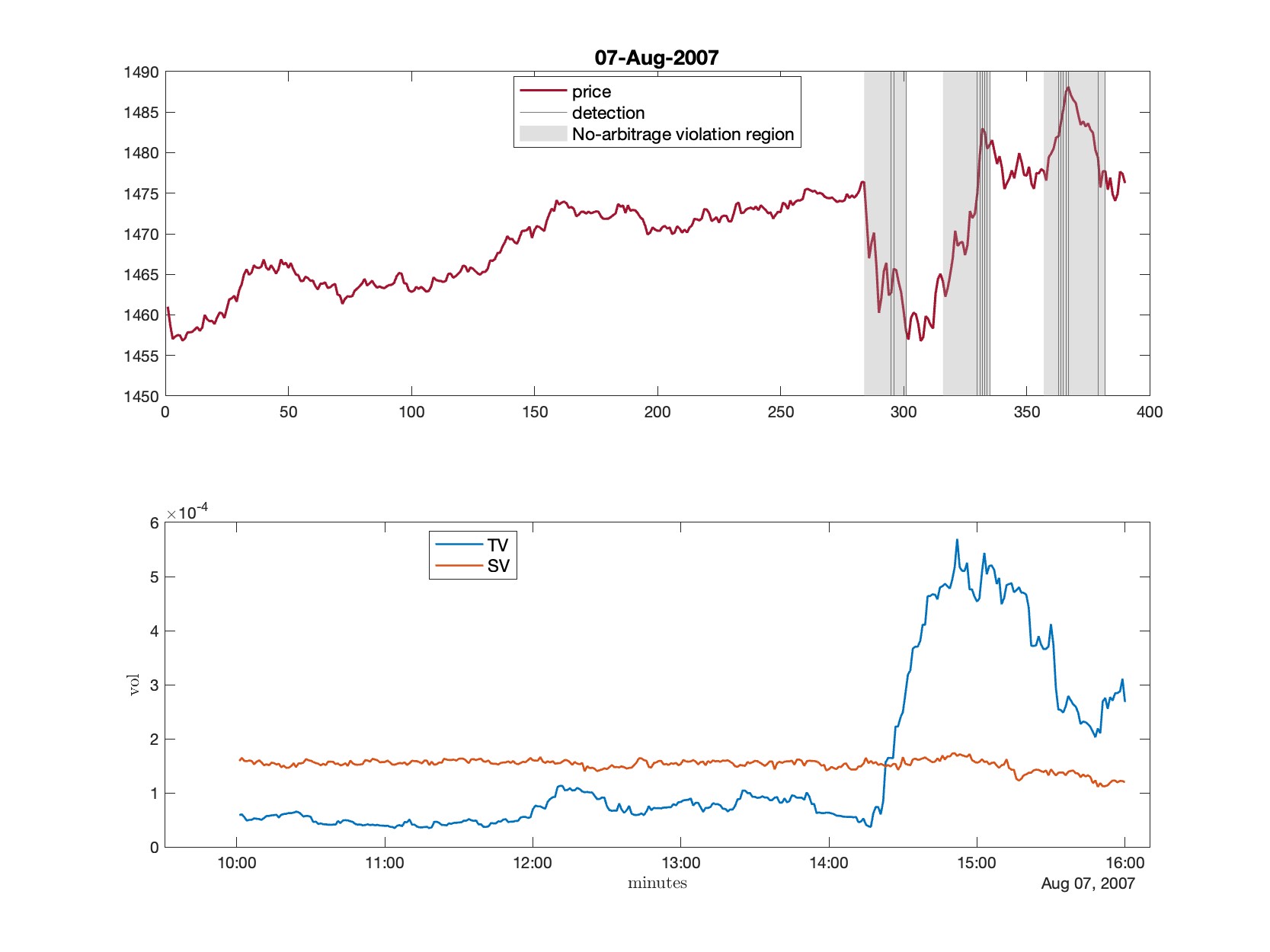}
\\ \vspace{-0.15in}
\caption{\small{{\bf Upper plot}: Price (red), Detection points (black vertical lines) and PN-regions (in grey) for August 7, 2007. {\bf Bottom plot}: same day's rolling window truncated realized volatility (TV, in blue) and spot volatility (SV, in orange) standardized by the day's average TV. Detection based on $\stoptime^{\rm ww}(\cv)$ with $(w_n,r_n) = (30,5)$ minutes, $\cv = 4$, and jump truncation threshold $\zeta = 4\,\widehat{\sigma}_{t-1}^{\, med}$.}}
\label{fig:violation_eg1}
\end{figure}

Table~\ref{table:violationregioncounts} reports the daily count of It\^o semimartingale violation regions detected by $\stoptime^{\rm ww}(\cv)$ with window length $(w_n,r_n) = (30,5)$ minutes under various values for the threshold $\cv$. We also split the trading day (of $390$ minutes) into three periods --- the ``Morning'' for the first $130$ minutes, the ``Noon'' for the middle $130$ minutes, and the ``Afternoon'' for the rest --- and report the associated count of violation regions initiated within that period. Finally, Figure~\ref{fig:histo_durations} plots the corresponding histograms for the duration of these violation regions for the scenario with $\cv = 4$.

\begin{figure}[!htb] 
\hspace*{-0mm}
\centering
\includegraphics[width=6in]
{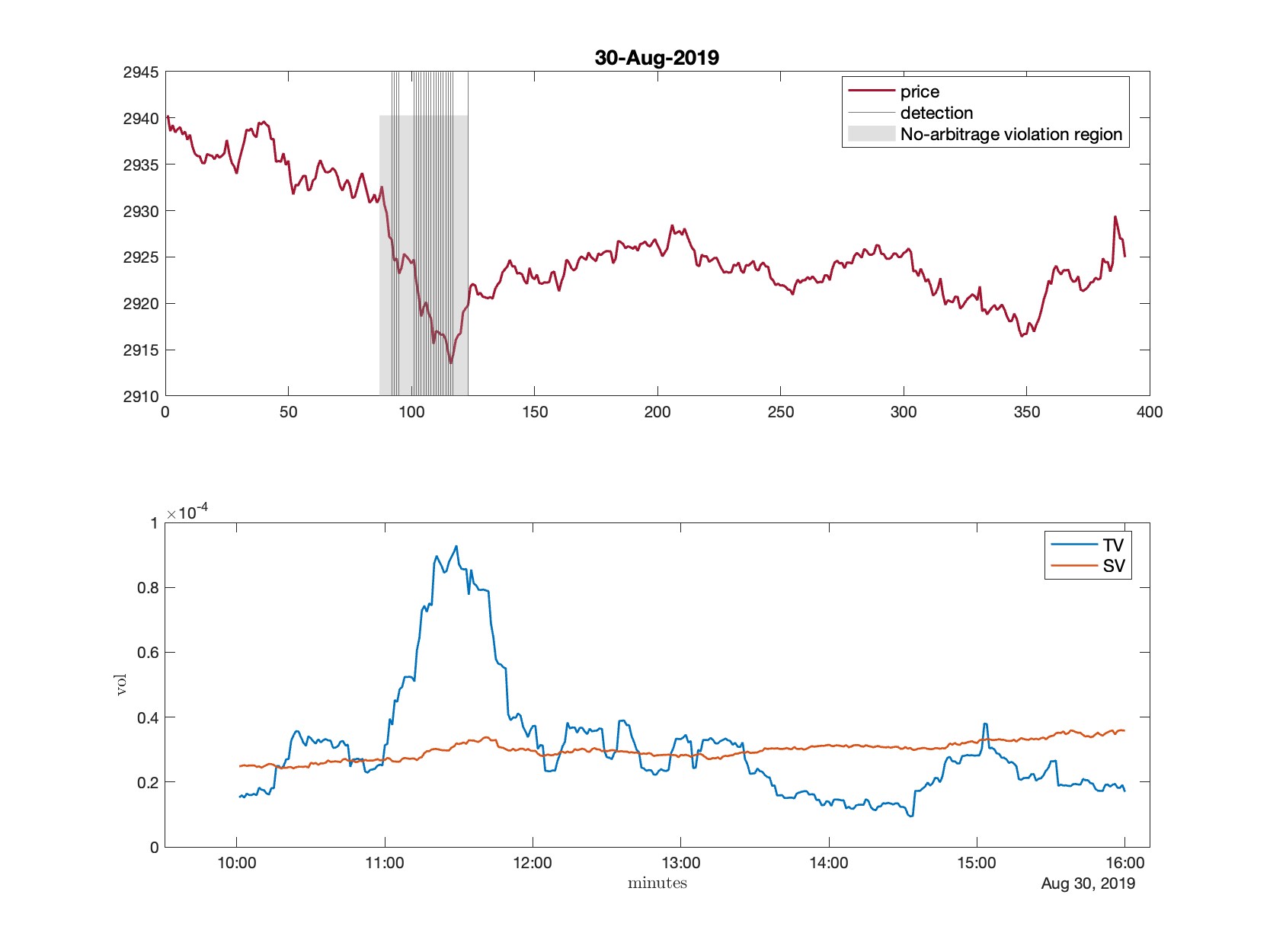}
\\ \vspace{-0.15in}
\caption{\small{{\bf Upper plot}: Price process (in red), Detection points (black vertical lines) and PN-regions (in grey) for August 30, 2019. {\bf Bottom plot}: same day's rolling window truncated volatility (TV, in blue) and spot volatility (SV, in orange) standardized by the day's average TV. Detection based on $\stoptime^{\rm ww}(\cv)$ with $(w_n,r_n) = (30,5)$ minutes, $\cv = 4$, and jump truncation threshold $\zeta = 4\,\widehat{\sigma}_{t-1}^{\, med}$.}}
\label{fig:violation_eg2}
\end{figure}

We find that ``Morning'' generates most violations, with the ``Noon'' containing only slightly fewer, while ``Afternoon'' produces the least, even if the difference is not striking. In sum, the violations are frequent and they are not particularly prone to occur in specific regions of the trading day. Finally, from the histograms in Figure~\ref{fig:histo_durations}, we note that the duration of a typical violation is short, with the clear majority lasting less than 20 minutes. Moreover, the violation regions starting in the morning tend to be somewhat shorter lived than those initiated during noon or afternoon.

\begin{table}[!htb] 
	\centering
	\caption{Count of Regions with It\^o Semimartingale Violations}
\begin{tabu}{c|c c c c}
          \hline\hline
Threshold $\cv$ & Total & Morning & Noon & Afternoon  \\
	      \midrule   
4 & 2186 & 824 & 738 & 624 \\ 
4.1 & 1901 & 714 & 641 & 546 \\ 
4.2 & 1638 & 620 & 542 & 476 \\ 
4.3 & 1398 & 520 & 461 & 417 \\ 
4.4 & 1172 & 438 & 378 & 356 \\ 
4.5 & 994 & 364 & 318 & 312 \\ 
          \bottomrule
\end{tabu}
\\ \vspace{-0.02in}
\caption*{The table reports the number of It\^o semimartingale violations and their distribution within the trading day. We split the day ($390$ min) into blocks: Morning (9:30 am--11:40 am), Noon (11:40 am--1:50 pm), and Afternoon (1:50 pm--4 pm), and count how many PN-regions are initiated within each block. For $\stoptime^{\rm ww}(\cv)$: window lengths $(w_n,r_n) = (30,5)$ minutes and jump truncation threshold $\zeta = 4\,\hat{\sigma}_{t-1}^{med}$.}
\label{table:violationregioncounts}
\end{table}

\begin{figure}[!htb] 
\hspace*{-0mm}
\centering
\includegraphics[width=3in]{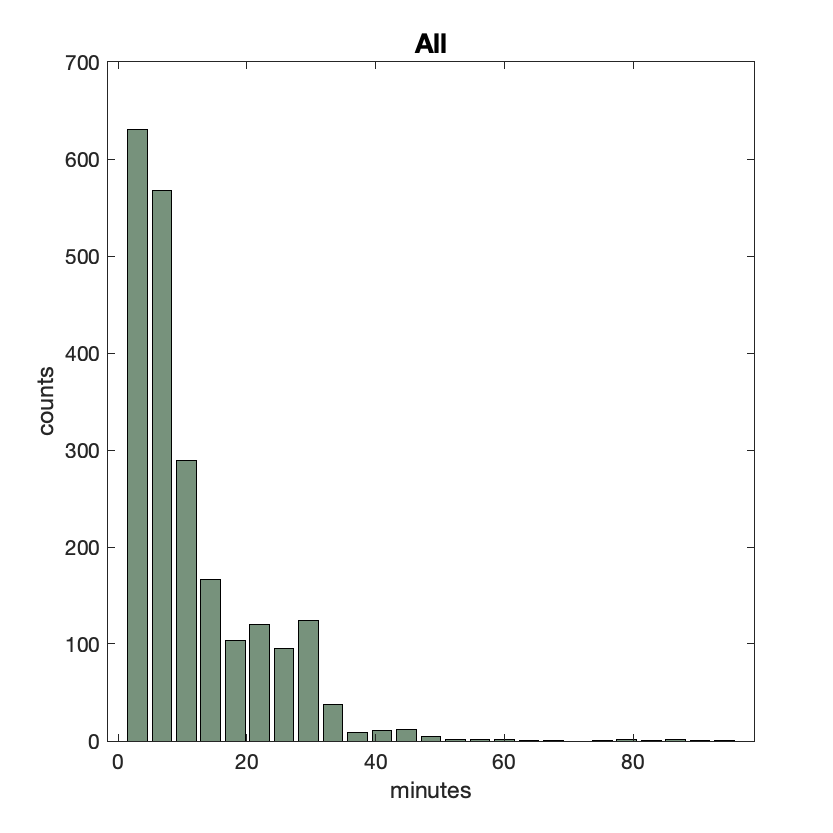} 
\includegraphics[width=3in]{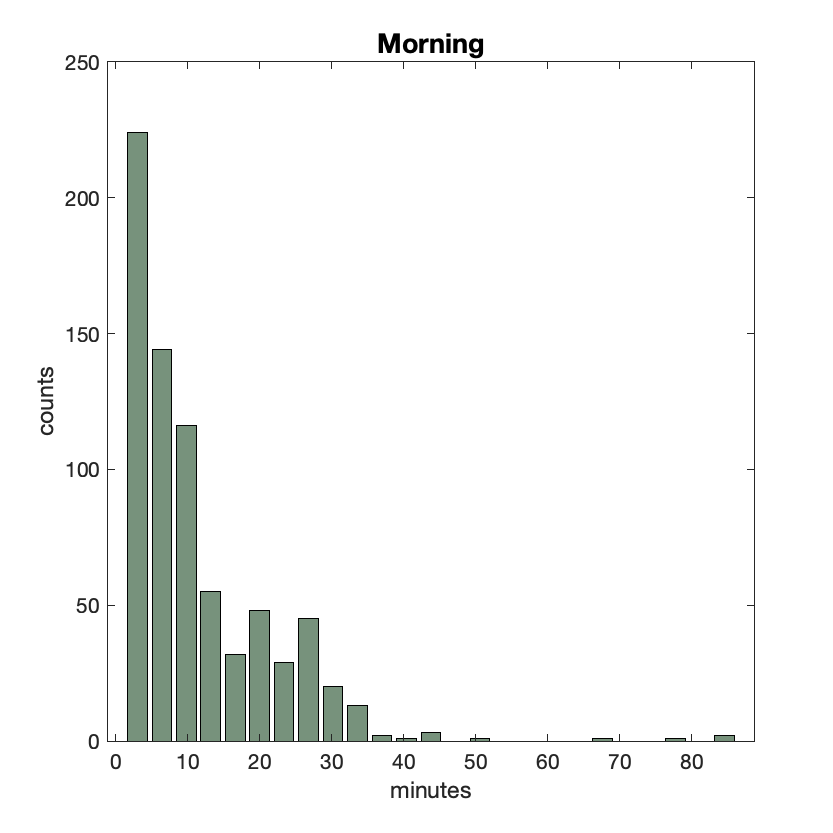} 
\includegraphics[width=3in]{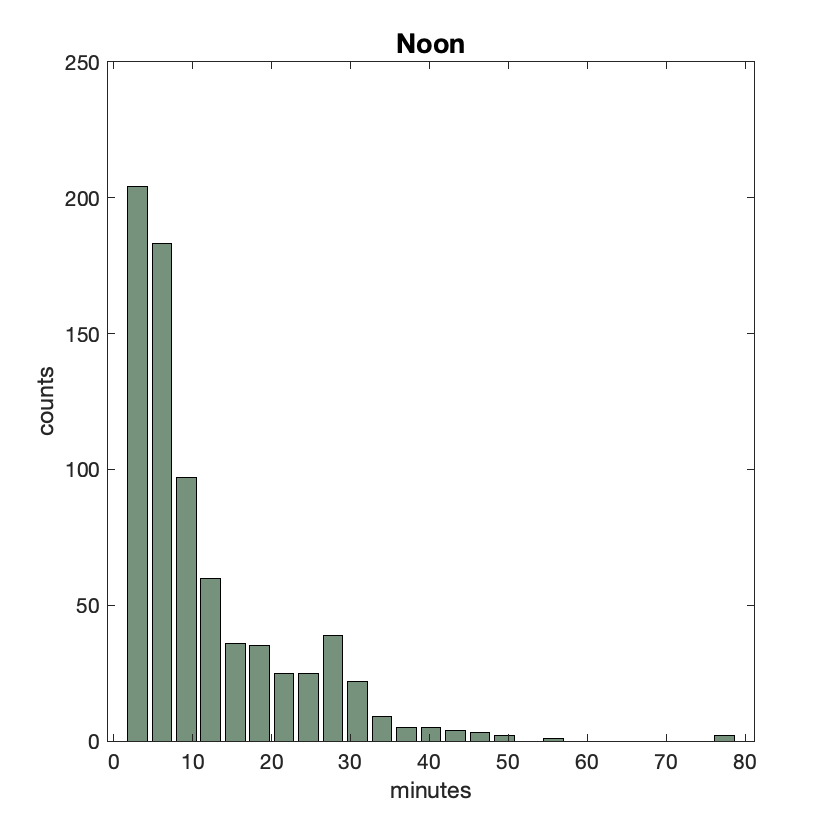} 
\includegraphics[width=3in]{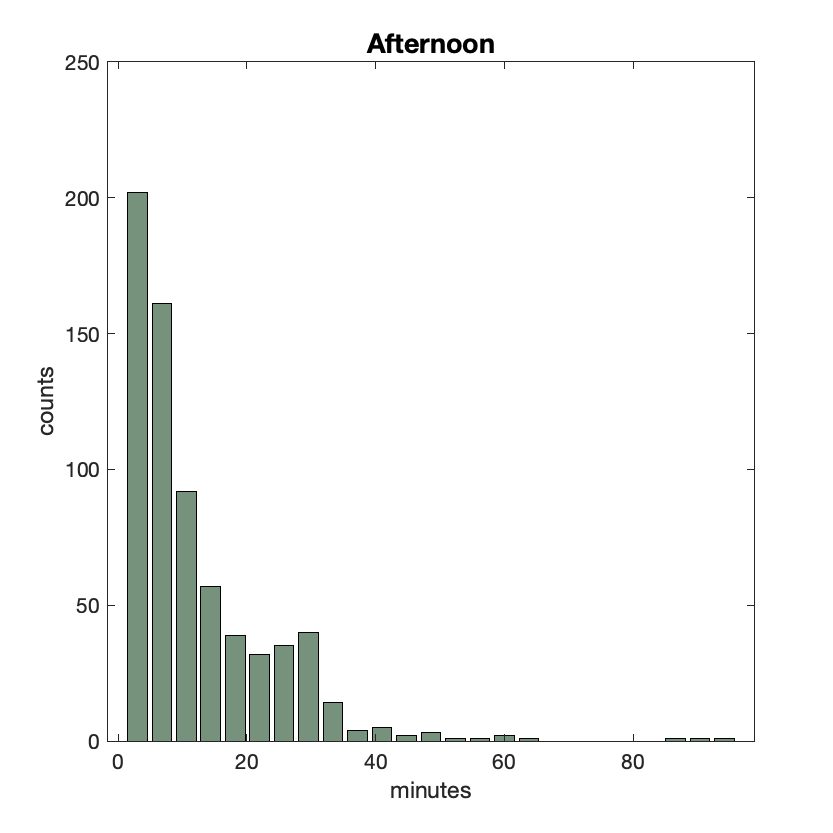} \\ \vspace{-0.2in}
\caption{Histograms of PN-region durations. Specifications: $\stoptime^{\rm ww}(\cv)$ with $(w_n,r_n) = (30,5)$ minutes and $\cv = 4$; Jump truncation threshold $\zeta = 4\,\widehat{\sigma}_{t-1}^{\, med}$.}
\label{fig:histo_durations}
\end{figure}

A manifestation of the It\^o semimartingale violation during the detected PN episodes is the divergence between return- and option-based spot volatility estimators. The bottom panels of Figures~\ref{fig:violation_eg1} and \ref{fig:violation_eg2} show that the divergences are large in these two specific cases. Table~\ref{table:vol_diff} reports the sample mean of the return- and option-based volatility estimates $TV$ and $SV$ over the hour before and after the initiation of a PN episode. The results reveal that the two volatility proxies are close before the PN episode but feature a substantial gap over the following hour, with the average $TV$ about $35\%$ higher than the average $SV$. This gap only grows larger, if we exclude days with FOMC announcements from the calculation, so the discrepancy is not driven by this type of macroeconomic announcements. These findings demonstrate how local deviations from the semimartingale assumption can distort the measurement of volatility.



\begin{table}[!htb] 
	\centering
	\caption{Volatility Estimates during Regions with It\^o Semimartingale Violations}
\begin{tabu}{c|c c c c}
          \hline\hline
~ & \multicolumn{2}{c}{1 hour before} & \multicolumn{2}{c}{1 hour after} \\
          \hline
~ & TV & SV & TV & SV \\
	      \midrule  
ALL     & 10.93$\times 10^{-5}$ & 10.58$\times 10^{-5}$ & 12.61$\times 10^{-5}$ & 10.53$\times 10^{-5}$ \\
No FOMC & 11.72$\times 10^{-5}$ & 11.56$\times 10^{-5}$ & 13.72$\times 10^{-5}$ & 11.51$\times 10^{-5}$ \\
          \hline
          \bottomrule
\end{tabu}
\caption*{This table reports the average volatility measures $TV$ and $SV$, one hour before and one hour after the starting point of PN episodes.}
\label{table:vol_diff}
\end{table}

Finally, Table \ref{table:yr_count} reports the number of PN episodes for each year across the sample given different threshold levels for detection. The PN events appear near uniformly distributed over the sample, with no particular clustering apparent during periods characterized by generally volatile or tranquil market conditions. Overall, we conclude that PN episodes are largely idiosyncratic events with no particular tendency to occur during specific times within the trading day or during specific market conditions.

\begin{table}[!htb] 
	\centering
	\caption{PN-episodes Yearly Counts}
\begin{tabu}{r|rrrrrr}
\hline\hline
Year & $\cv = 4$ & $\cv = 4.1$ & $\cv = 4.2$ & $\cv = 4.3$ & $\cv = 4.4$ & $\cv = 4.5$ \\ 
\midrule
2007 & 171 & 155 & 132 & 105 & 83 & 71 \\ 
2008 & 132 & 117 & 98 & 88 & 77 & 63 \\ 
2009 & 89 & 80 & 65 & 56 & 42 & 30 \\ 
2010 & 129 & 108 & 94 & 80 & 65 & 53 \\ 
2011 & 133 & 117 & 100 & 88 & 78 & 57 \\ 
2012 & 133 & 109 & 92 & 83 & 69 & 56 \\ 
2013 & 194 & 168 & 142 & 118 & 103 & 92 \\ 
2014 & 241 & 206 & 165 & 143 & 112 & 96 \\ 
2015 & 139 & 121 & 116 & 97 & 87 & 76 \\ 
2016 & 153 & 137 & 118 & 100 & 87 & 75 \\ 
2017 & 170 & 151 & 132 & 113 & 99 & 84 \\ 
2018 & 150 & 131 & 117 & 98 & 80 & 71 \\ 
2019 & 172 & 153 & 133 & 111 & 91 & 83 \\ 
2020 & 180 & 148 & 134 & 118 & 99 & 87 \\ 
\hline
\bottomrule
\end{tabu}
\caption*{This table reports annual persistent-noise episode counts from 2007 to 2020 under different threshold values.}
\label{table:yr_count}
\end{table}

\section{Conclusion} \label{sec:conclusion}

In this paper, we focus on real-time detection of local It\^o semimartingale violations that have drawn increasing attention in the recent literature. We propose CUSUM-type detectors as stopping rules exploiting high-frequency data. We show that they possess desirable theoretical properties under infill asymptotics. Specifically, for a suitably chosen average run length (the average sample length until a false alarm), our detectors enable ``quick'' detection of a violation, once it occurs. Our formal interpretation of rapid detection is that the bound on the detection delay (BDD) shrinks to zero under infill asymptotics, as the sampling interval $\Delta_n$ goes to zero. These properties are corroborated through simulations calibrated to reflect key features of market data. Finally, we apply our detector to S\&P 500 equity (SPX) index futures data. We obtain timely detection of a nontrivial number of short-lived episodes involving likely semimartingale violations. Such turbulent market events are critical for real-time decision making. For example, they may indicate temporary market malfunction motivating exchange or regulatory action, they may induce the termination of trading strategies and algorithms among active market participants triggering a period of fleeting liquidity, and they signal problems in extracting reliable high-frequency based market volatility and risk measures.

\clearpage
\bibliographystyle{asa}
\bibliography{reference}

\clearpage
\appendix
\section{Proofs} \label{appsec:proofs}

\begin{proof}[Proof of Proposition~\ref{prop:consistency_BM}]
As it will become clear from the proof below, it suffices to conduct the analysis for the case when $0\leq l_1^n\leq k_n$ and $0\leq l_2^n\leq k_n$. To simplify the notation, we will use $l_1$ and $l_2$ for $l_1^n$ and $l_2^n$, respectively, in the proof. We can write
\begin{equation}
\begin{split}
&\mathbb{P}_{0}\left(\max_{0<l_1<l_2\leq k_n} \frac{|\widehat{\bm}_{l_1,l_2} - \widecheck{\bm}_{l_1,l_2}|}{\sqrt{(l_2-l_1)\Delta_n}}>\xi\right) \\ &~~~~~~~~ \leq \sum_{i=1}^{k_n}\mathbb{P}_{0}\left(\frac{1}{\sqrt{\Delta_n}}\left|\frac{\Delta_i^nX}{\hat{\sigma}_{(i-1)\Delta_n}}\indicator_{\{|\Delta_i^nX|<\zeta\Delta_n^\varpi\}} - \Delta_i^nW\right|>\frac{\xi}{\sqrt{k_n}}\right),
\end{split}
\end{equation}
and further
\begin{equation}
\begin{split}
&\mathbb{P}_{0}\left(\frac{1}{\sqrt{\Delta_n}}\left|\frac{\Delta_i^nX}{\hat{\sigma}_{(i-1)\Delta_n}}\indicator_{\{|\Delta_i^nX|<\zeta\Delta_n^\varpi\}} - \Delta_i^nW\right|>\frac{\xi}{\sqrt{k_n}}\right)\\&~\leq \mathbb{P}_{0}\left(\frac{|\Delta_i^nX|}{\sqrt{\Delta_n}}\frac{|\hat{\sigma}_{(i-1)\Delta_n}-\sigma_{(i-1)\Delta_n}|}{\sigma_{(i-1)\Delta_n}}>\frac{C_0}{l_n}\frac{\xi}{4\sqrt{k_n}}\right)\\&~~~+\mathbb{P}_0\bigg(\frac{|\Delta_i^nJ|+|\int_{(i-1)\Delta_n}^{i\Delta_n}b_sds|}{\sqrt{\Delta_n}\sigma_{(i-1)\Delta_n}}>\frac{\xi}{4\sqrt{k_n}}\bigg) + \mathbb{P}_0\left(|\Delta_i^nX|\geq \zeta\Delta_n^\varpi\right) \\&~~~+\mathbb{P}_0\bigg(\frac{|\int_{(i-1)\Delta_n}^{i\Delta_n}(\sigma_s-\sigma_{(i-1)\Delta_n})dW_s|}{\sqrt{\Delta_n}\sigma_{(i-1)\Delta_n}}>\frac{\xi}{4\sqrt{k_n}}\bigg).
\end{split}
\end{equation}
Using Assumption 1 for the smoothness in expectation of $\sigma$, the integrability conditions of Assumption 1, Assumption 2 and applying Cauchy–Schwarz inequality, we have 
\begin{equation}
\mathbb{P}_{0}\left(\frac{|\Delta_i^nX|}{\sqrt{\Delta_n}}\frac{|\hat{\sigma}_{(i-1)\Delta_n}-\sigma_{(i-1)\Delta_n}|}{\sigma_{(i-1)\Delta_n}}>\frac{C_0}{l_n}\frac{\xi}{4\sqrt{k_n}}\right)\leq C_0\left(k_n\Delta_n + \frac{\sqrt{k_n}l_n\delta_n}{\xi}\right).
\end{equation}
Using the integrability conditions for the process $b$ and the jump size function  as well as the smoothness in expectation condition for $\sigma$ in Assumption 1, we have 
\begin{equation}
\mathbb{P}_0\left(\frac{|\Delta_i^nJ|+|\int_{(i-1)\Delta_n}^{i\Delta_n}b_sds|}{\sqrt{\Delta_n}\sigma_{(i-1)\Delta_n}}>\frac{\xi}{4\sqrt{k_n}}\right)\leq C_0\left(\frac{\sqrt{k_n\Delta_n}}{\xi}+k_n\Delta_n\right).
\end{equation}
Next, using the smoothness in expectation condition for $\sigma$ in Assumption 1, we have
\begin{equation}
\mathbb{P}_0\left(\frac{|\int_{(i-1)\Delta_n}^{i\Delta_n}(\sigma_s-\sigma_{(i-1)\Delta_n})dW_s|}{\sqrt{\Delta_n}\sigma_{(i-1)\Delta_n}}>\frac{\xi}{4\sqrt{k_n}}\right)\leq C_0\frac{k_n\Delta_n}{1\wedge \xi^2}.
\end{equation}
Finally, given the integrability conditions in Assumption 1, we have
\begin{equation}
\mathbb{P}_0\left(|\Delta_i^nX|\geq \zeta\Delta_n^\varpi\right)\leq C_0\Delta_n^{1-\varpi}.
\end{equation}
Combining the above bounds, we get the result to be proved.
\end{proof}

\bigskip

\begin{proof}[Proof of Theorem~\ref{thm:GLRw-CUSUM_null}]
Recalling that $\widecheck{\bm}$ is defined as a discrete random walk with $N(0,\Delta_n)$ increments, by \cite[Proposition 1]{siegmund1995using}, we have 
\begin{align} \label{eqn:siegmundProp1}
\measure\left\{ \max_{0<l_1<l_2<m} \frac{|\widecheck{\bm}_{l_1,l_2}|}{\sqrt{(l_2-l_1)\Delta_n}} > \cv \right\} 
\sim 
m\ARLHc\cv\phi(\cv),
\end{align}
as $\cv\to\infty$, for $m = \bar{c}\cv^2$ with some $\bar{c}>0$ fixed, where 
\begin{align*}
\ARLHc := \int_{\bar{c}^{-1/2}}^{\infty}x\nu^2(x)\dd x - \bar{c}^{-1}\int_{\bar{c}^{-1/2}}^{\infty}x^{-1}\nu^2(x)\dd x. 
\end{align*}

The above result based on the analysis of the random fields (see \citet{siegmund1988approximate}) shows that the probability of the maximum of random fields $|\widecheck{\bm}_{l_1,l_2}|/\sqrt{l_2-l_1}$ for $0 < l_1 < l_2 \leq m$ exceeding $\cv$, when $m$ is of order $\cv^2$, goes to $0$ exponentially fast.

From the statement of Theorem~\ref{thm:GLRw-CUSUM_null}, the expected sample size should be of order $1/(\cv\phi(\cv)) \sim \exp(\cv^2)/\cv$, and it is much bigger than the order of $m$ (which is $\cv^2$). We follow the trick in \citet{siegmund1995using} to complete the proof of the theorem. Let the window size $w_n$ be chosen such that $w_n \geq m$ and $w_n/m$ is a positive integer. More specifically, we set $m = w_n$ if $a$ in the statement of the theorem is finite and if $a=\infty$ we set $m = w_n/\lfloor w_n/(\overline{c}\cv^2)\rfloor$. We note that with the latter choice, $m/\cv^2\rightarrow \overline{c}$. Picking $M$ with $M = r\E_{\infty}[\stoptime^{\rm w}(\cv)] \sim r/(\ARLH_a\cv\phi(\cv))$ for some fixed $r$ bounded away from 0, we have
\begin{align} \label{eqn:GLRw_hattocheck}
\measure\left\{\stoptime^{\rm w} > M\right\} 
&\leq \measure\left\{ \max_{0 \leq j<\frac{M}{m}}\max_{jm<l_1<l_2<(j+1)m} \frac{|\widehat{\bm}_{l_1,l_2}|}{\sqrt{(l_2-l_1)\Delta_n}} < \cv \right\} \nonumber \\
&\leq \measure\left\{ \max_{0 \leq j<\frac{M}{m}}\max_{jm<l_1<l_2<(j+1)m} \frac{|\widecheck{\bm}_{l_1,l_2}|}{\sqrt{(l_2-l_1)\Delta_n}} < \cv (1+\iota_n)\right\} \\
&~~~+ \measure\left\{ \max_{0 \leq j<\frac{M}{m}}\max_{jm<l_1<l_2<(j+1)m} \frac{|\widehat{\bm}_{l_1,l_2}-\widecheck{\bm}_{l_1,l_2}|}{\sqrt{(l_2-l_1)\Delta_n}} \geq \cv\iota_n\right\} \nonumber
\end{align}
for some deterministic sequence of positive $\iota_n$ satisfying $\iota_n\rightarrow 0$ and $\cv\iota_n\rightarrow \infty$.

Consider the first term and let $\epsilon\in(0,1)$, we have
\begin{align*}
&~ \measure\left\{ \max_{0 \leq j<\frac{M}{m}}\max_{jm<l_1<l_2<(j+1)m} \frac{|\widecheck{\bm}_{l_1,l_2}|}{\sqrt{(l_2-l_1)\Delta_n}} < \cv (1+\iota_n)\right\} \\
&= \prod_{0 \leq j<\frac{M}{m}}\measure\left\{\max_{jm<l_1<l_2<(j+1)m} \frac{|\widecheck{\bm}_{l_1,l_2}|}{\sqrt{(l_2-l_1)\Delta_n}} < \cv(1+\iota_n) \right\}  \\
&= \left(1 - \measure\left\{\max_{0<l_1<l_2<m} \frac{|\widecheck{\bm}_{l_1,l_2}|}{\sqrt{(l_2-l_1)\Delta_n}} > \cv (1+\iota_n) \right\}\right)^{\frac{M}{m}}  \\
&\leq \left(1 - m\ARLH_a\cv(1+\iota_n)\phi(\cv(1+\iota_n))(1 - \epsilon)\right)^{\frac{M}{m}},
\end{align*}
where the last inequality holds for all large $\cv$ due to the result in (\ref{eqn:siegmundProp1}) with $\bar{c}$ chosen such that $m \lesssim w_n$ and thus $\ARLHc \lesssim \ARLH_a < \ARLH$, where $\ARLH_a := d(a)$ (see \cite{lai1999efficient} and \cite{chan2003saddlepoint}). Note that $\ARLH = \int_{0}^{\infty}x\nu^2(x)\dd x = \lim_{\overline{c}\to\infty}\ARLHc = \lim_{a\to\infty}\ARLH_a$. Then, because $1 - x \leq \exp(-x)$ for $x>0$, we have 
\begin{align*}
&~ \measure\left\{ \max_{0 \leq j<\frac{M}{m}}\max_{jm<l_1<l_2<(j+1)m} \frac{|\widecheck{\bm}_{l_1,l_2}|}{\sqrt{(l_2-l_1)\Delta_n}} < \cv (1+\iota_n)\right\} \\
&\leq \exp\left(- M\ARLH_a\cv(1+\iota_n)\phi(\cv(1+\iota_n))(1 - \epsilon)\right) \\
&\sim \exp\left(- M\ARLH_a\cv\phi(\cv)(1 - \epsilon)\right) \\
&= \exp\left(- r(1 - \epsilon)\right).
\end{align*}

Turning to the second term, we show it is asymptotically negligible relative to the first term. Recalling that $m = \overline{c}\cv^2$, we apply Proposition~\ref{prop:consistency_BM} (note that $M\Delta_n\rightarrow 0$ by the assumption of the theorem) and have

\begin{align*}
\begin{split}
&~ \measure\left\{ \max_{0 \leq j<\frac{M}{m}}\max_{jm<l_1<l_2<(j+1)m} \frac{|\widehat{\bm}_{l_1,l_2}-\widecheck{\bm}_{l_1,l_2}|}{\sqrt{(l_2-l_1)\Delta_n}} \geq \cv\iota_n\right\} \\
&\leq \sum_{j=0}^{M/m}\measure\left\{ \max_{jm<l_1<l_2<(j+1)m} \frac{|\widehat{\bm}_{l_1,l_2}-\widecheck{\bm}_{l_1,l_2}|}{\sqrt{(l_2-l_1)\Delta_n}} \geq \cv\iota_n\right\}  \\
&~ \leq C_0M\left(\frac{\sqrt{m}(l_n\delta_n + \sqrt{\Delta_n})}{\xi\iota_n} + \Delta_n^{1-\varpi}\right) 
\leq C_0M\left(\frac{(l_n\delta_n+\sqrt{\Delta_n})}{\iota_n}+\Delta_n^{1-\varpi}\right). 
\end{split}
\end{align*}
If we pick $\iota_n$ such that $\iota_n\xi\rightarrow \infty$ and in addition $\left[(l_n\delta_n + \sqrt{\Delta_n})/\iota_n + \Delta_n^{1-\varpi}\right]/[\cv\phi(\cv)]\rightarrow 0$ (which is possible due to the restriction in the theorem) and noting that $\Delta_n^{1-\varpi}/\sqrt{\Delta_n} \to 0$ (because $\varpi < 1/2$), then 
\begin{align*}
\begin{split}
& \measure\left\{ \max_{0 \leq j<\frac{M}{m}}\max_{jm<l_1<l_2<(j+1)m} \frac{|\widehat{\bm}_{l_1,l_2} - \widecheck{\bm}_{l_1,l_2}|}{\sqrt{(l_2-l_1)\Delta_n}} \geq \cv\iota_n\right\}  \\
& \bigg/\measure\left\{ \max_{0 \leq j<\frac{M}{m}}\max_{jm<l_1<l_2<(j+1)m} \frac{|\widecheck{\bm}_{l_1,l_2}|}{\sqrt{(l_2-l_1)\Delta_n}} < \cv(1+\iota_n)\right\}\rightarrow 0.
\end{split}
\end{align*}

Altogether, therefore, we have, for $\epsilon\in(0,1)$, that
\begin{align}\label{eqn:thm1_result_oneside}
\measure\left\{\stoptime^{\rm w} > M\right\} \leq \exp\left( - M\ARLH_a\cv\phi(\cv)(1 - \epsilon)\right) = \exp\left( - r(1 - \epsilon)\right).
\end{align}

We next proceed with bounding $\measure\left\{\stoptime^{\rm w} \leq M\right\}$. We have
\begin{align}  \label{eqn:thm1_expand_anotherside}
\measure\left\{\stoptime^{\rm w} \leq M\right\} 
\leq 
&~ \measure\left\{ \max_{0 \leq j<\frac{M}{m}}\max_{jm<l_1<l_2<(j+1)m} \frac{|\widehat{\bm}_{l_1,l_2}|}{\sqrt{(l_2-l_1)\Delta_n}} > \cv \right\}  \\
&~ + \measure\left\{ \max_{0 \leq j<\frac{M}{m}}\max_{jm<l_1<(j+1)m, (j+1)m<l_2<(j+1)m+w_n} \frac{|\widehat{\bm}_{l_1,l_2}|}{\sqrt{(l_2-l_1)\Delta_n}} > \cv \right\}, \nonumber
\end{align}
where in the second term we need to consider only the maximum over $(j+1)m<l_2<(j+1)m+w_n$ for each $j$ due to the limited window size. Let $\epsilon>0$. For the first term, using similar trick to the one for bounding $\measure\left\{\stoptime^{\rm w} > M\right\}$ above, we can show
\begin{align*} 
& \measure\left\{ \max_{0 \leq j<\frac{M}{m}}\max_{jm<l_1<l_2<(j+1)m} \frac{|\widehat{\bm}_{l_1,l_2}|}{\sqrt{(l_2-l_1)\Delta_n}} > \cv \right\} \\
&= 1 - \measure\left\{ \max_{0 \leq j<\frac{M}{m}}\max_{jm<l_1<l_2<(j+1)m} \frac{|\widehat{\bm}_{l_1,l_2}|}{\sqrt{(l_2-l_1)\Delta_n}} \leq \cv \right\} \\
&\leq 1 - \exp\left(-M\ARLH_a\cv\phi(\cv)(1 + \epsilon)\right) \\
&= 1 - \exp\left(-r(1 + \epsilon)\right).
\end{align*}
If we can further show that the second term in (\ref{eqn:thm1_expand_anotherside}) is asymptotically negligible, relative to  $\exp\left( - r(1 + \epsilon)\right)$, we will have 
\begin{align} \label{eqn:thm1_result_anotherside}
\measure\left\{\stoptime^{\rm w} > M\right\} \geq \exp\left( - r(1 + \epsilon)\right),
\end{align}
which, together with (\ref{eqn:thm1_result_oneside}), implies
\begin{align*}
\measure\left\{\stoptime^{\rm w} \leq M\right\} \sim 1 - \exp\left( - r\right),
\end{align*}
that $\stoptime^{\rm w}(\cv)$ is asymptotically (as $\cv\to\infty$) exponentially distributed with mean $1/(\ARLH_a\cv\phi(\cv))$.

For the second term in (\ref{eqn:thm1_expand_anotherside}), first we note that 
\begin{align*}
&\measure\left\{ \max_{0<j<\frac{M}{m}}\max_{jm<l_1<(j+1)m, (j+1)m<l_2<(j+1)m+w_n} \frac{|\widehat{\bm}_{l_1,l_2}|}{\sqrt{(l_2-l_1)\Delta_n}} > \cv \right\}  \\
\leq~ &\measure\left\{ \max_{0<j<\frac{M}{m}}\max_{jm<l_1<(j+1)m, (j+1)m<l_2<(j+1)m+w_n} \frac{|\widecheck{\bm}_{l_1,l_2}|}{\sqrt{(l_2-l_1)\Delta_n}} > \cv(1-\iota_n) \right\}  \\
& + \measure\left\{ \max_{0<j<\frac{M}{m}}\max_{jm<l_1<(j+1)m, (j+1)m<l_2<(j+1)m+w_n} \frac{|\widehat{\bm}_{l_1,l_2} - \widecheck{\bm}_{l_1,l_2}|}{\sqrt{(l_2-l_1)\Delta_n}} \geq \cv\iota_n \right\},
\end{align*}
for the same sequence $\iota_n$ as above. Again, by Proposition~\ref{prop:consistency_BM} and the condition that $[(\delta_nl_n+\sqrt{\Delta_n})/\iota_n+\Delta_n^{1-\varpi}]/[\cv\phi(\cv)] \rightarrow 0$, the second probability on the right-hand-side of the above inequality is asymptotically negligible relative to the first one.

For the first of the two probabilities, we can show that it is less than $\epsilon$, provided $\bar{c}$ is large enough, for all large $\cv$ (i) by applying again the random field result in \citet{siegmund1988approximate} if $w_n$ is at the order of $\bar{c}^{1/2}m = \bar{c}^{3/2}\cv^2$ or (ii) by \citet[Lemma 2--8]{siegmund1995using} if $w_n$ is at a larger order (say, e.g., of $1/\cv\phi(\cv)$) or even through infinity. The latter case (ii) thus applies also to the non-window-limited stopping rule $\stoptime(\cv)$ (where $w_n = \infty$), as well as all above arguments. For this case we replace $\ARLH_a$ by $\ARLH$ since $\ARLH_a \to \ARLH$ as $a_n\to\infty$.  
\end{proof}

\bigskip

\begin{proof}[Proof of Theorem~\ref{thm:GLRw-CUSUM_alter}] 
Without loss of generality, we can set $\sigma_0 = 1$ and we do so in the rest of the proof in order to simplify exposition. For some deterministic sequence $\iota_n\downarrow 0$ and such that $\cv\iota_n\rightarrow \infty$ and $\sqrt{w_n}l_n\delta_n\Delta_n^{\varpi-1/2}/(\xi\iota_nT_n)\rightarrow 0$ (such a sequence is possible to pick due to the restriction on $w_n$ of the theorem), we have
\begin{align}
&\mathbb{P}\left(\stoptime^{\rm w}(\cv)>\widebar{T}_n\right)
\leq 
\mathbb{P}\left(\overline{\stoptime}^{\rm w}(\cv(1+\iota_n))>\widebar{T}_n\right) \nonumber \\
&~ + \mathbb{P}\left(\max_{0\leq j<\frac{\widebar{T}_n}{w_n}}\max_{jw_n<l_1<l_2<(j+1)w_n} \frac{|\widehat{\bm}_{l_1,l_2}-\overline{\bm}_{l_1,l_2}|}{\sqrt{(l_2-l_1)\Delta_n}} \geq \cv\iota_n\right),
\end{align}
where $\widebar{T}_n = \tau_n/\Delta_n+T_n$, $\overline{W}_{l_1,l_2} = \frac{1}{\sigma_0}\sum_{i = l_1+1}^{l_2} \Delta_i^n\logs\indicator_{\{|\Delta_i^n\logs|<\zeta\Delta_n^\varpi\}}$, and $\overline\stoptime^{\rm w}(\cv)$ is the stopping time associated with $\overline{W}_{l_1,l_2}$. For the second of the two probabilities on the right-hand side of the above inequality, we can argue as in the proof of Proposition~\ref{prop:consistency_BM} and given the condition of the theorem conclude that it is asymptotically negligible. Hence, it suffices to consider the behavior of $\overline\stoptime^{\rm w}(\cv)$ in what follows. 

Next, we denote  $i_n = \inf\{i=1,...,n: i\Delta_n\geq \tau_n\}$ and introduce
\begin{equation}
K_n = \inf\left\{i=0,1,...: \left|c\int_{(i+i_n-1)\Delta_n}^{(i+i_n)\Delta_n}(s-\tau_n)^{-\vartheta}ds\right|<\zeta\sqrt{\Delta_n}\right\}.
\end{equation}
For $n$ large, we have $K_n\asymp \Delta_n^{\frac{1/2-\vartheta}{\vartheta}}$. We denote the stopping time associated with a discretized process constructed from the increments $\frac{1}{\sigma_0}\Delta_i^n\logs\indicator_{\{|\Delta_i^n\logs|<\zeta\Delta_n^\varpi\}}$, for $i\geq i_n+K_n$, with $\overline\stoptime^{\rm w}_0(\cv)$. Since $w_n/K_n\rightarrow 0$, we have $\overline{\stoptime}^{\rm w}(\cv)>\widebar{T}_n$ $\Longrightarrow$ $\overline\stoptime^{\rm w}_0(\cv)>T_n$ and hence it suffices to bound the probability $\mathbb{P}\left(\overline\stoptime^{\rm w}_0(\cv)>T_n\right)$.

For the increment of $H$, we have
\begin{align*}
\Delta_i^{n}\pn 
&= \pn_{i\Delta_n} - H_{(i-1)\Delta_n} \\
&= \frac{c}{1-\vartheta}\left[(i\Delta_n-\tau_n)^{1-\vartheta} - ((i-1)\Delta_n-\tau_n)^{1-\vartheta}\right] \\
&\sim c\left(i-\frac{\tau_n}{\Delta_n}\right)^{-\vartheta}\Delta_n^{1-\vartheta}, {\rm ~~for~large~} i> \frac{\tau_n}{\Delta_n},
\end{align*}
and we note that the sequence of $\Delta_i^nH$ is decreasing in absolute value (with each element of the same sign), for $i > i_n$. Also, for $i>i_n+K_n$, we have that $|\Delta_i^nH-\Delta_{i-1}^nH|\leq \zeta\sqrt{\Delta_n}$ for $n$ large enough. Furthermore, given the behavior of maxima of normal i.i.d.\ variables with zero mean, see for instance Example 3.3.29 in \cite{embrechts2013modelling}, we have that 
\begin{equation}
\mathbb{P}\left(\max_{i=1,...,\lfloor \widebar{T}_n \rfloor}|\Delta_i^nW|>\log(1/\Delta_n)\sqrt{\Delta_n}\right)\rightarrow 0.
\end{equation}
Hence, it suffices to consider only the set on which $\max_{i=1,...,\lfloor \widebar{T}_n \rfloor}|\Delta_i^nW|\leq \log(1/\Delta_n)\sqrt{\Delta_n}$. On this set, since  $|\Delta_i^nH-\Delta_{i-1}^nH|\leq \zeta\sqrt{\Delta_n}$ for $i>i_n+K_n$ and $n$ sufficiently large, we have that the truncation is not triggered for $i>i_n+K_n$. As a result, we only need to bound 
$\mathbb{P}\left(\check\stoptime^{\rm w}_0(\cv)>T_n\right)$ with $\check\stoptime_0^{\rm w}(\cv)$ being the stopping time associated with $\check{W}_{l_1,l_2} = \sum_{i = i_n+K_n+l_1+1}^{i_n+K_n+l_2}(b_0\Delta_n+\Delta_i^nW + \Delta_i^nH)$.

We can then pick $M_n= K_n+\log(\cv)$. 
In this case, $\min_{i> \tau_n/\Delta_n}|\Delta_i^nH|>|\mu_{M_n}|$, where we set $\mu_{M_n} = cM_n^{-\vartheta}\Delta_n^{1-\vartheta}$. Note that, given the definition of $K_n$, $\sup_{i=i_n+K_n,...,i_n+M_n}|\Delta_i^nH-\mu_{M_n}|\leq C\sqrt{\Delta_n}$, for some random variable $C$. Hence, 
\begin{align}
&\mathbb{P}\left(\check\stoptime^{\rm w}_0(\cv)>T_n\right)
\leq 
\mathbb{P}\left(\dot\stoptime^{\rm w}_0(\cv)>T_n\right) \nonumber \\
&~ + \mathbb{P}\left(\sup_{i=i_n+K_n,...,i_n+M_n}|\Delta_i^nH+b_0\Delta_n-\mu_{M_n}| \geq \cv\iota_n\right),
\end{align}
for $\dot\stoptime^{\rm w}_0(\cv)$ being the stopping time associated with $\dot{W}_{l_1,l_2} = \sum_{i = i_n+K_n+l_1+1}^{i_n+K_n+l_2}(\Delta_i^nW + \mu_i^n)$ with $\mu_i^n = \mu_{M_n}$ for $i=i_n+K_n,...,i_n+M_n$ and $\mu_i^n = \Delta_i^nH$ otherwise.

Finally, we denote
\begin{align} 
\widetilde\stoptime(\cv) := \inf\left\{l > 0: \max_{0<k<l} \frac{|\widetilde{\bm}_{k,l}|}{\sqrt{(l-k)\Delta_n}} > \cv \right\},
\end{align}
where $\widetilde{\bm}_{l_1,l_2}$ is the random walk with $N(\mu_{M_n}\Delta_n^{1/2},\Delta_n)$ increments on the index interval $[l_1,l_2]$, for $l_1, l_2 \in \SN_+$. Note that $\dot\stoptime^{\rm w}_0(\cv) = \widetilde\stoptime^{\rm w}(\cv)$ if $\dot\stoptime^{\rm w}_0(\cv)<M_n$. Hence, we will be done if we can show that $\mathbb{P}\left(\widetilde\stoptime^{\rm w}(\cv)< M_n\right)\rightarrow 1$.


This is what we do next. The rest of our proof is based on the \textit{nonlinear renewal theory} of \cite{lai1979nonlinear} and the argument for CUSUM-type statistic by \cite{dragalin1994optimality}.\footnote{One could also consider a very similar nonlinear renewal theory in \citet{hagwood1982expansion} (and verify the conditions therein).} To keep notation simple, we use $\mu$ for $\mu_{M_n}$ below.

We decompose the \textit{nonlinear} (square of the) statistic term into a summation of a linear term $\mu S_l$ and an additional term $V_l'$, as
\begin{align*}
  \max_{0<k<l} \frac{(\widetilde{\bm}_{k,l})^2}{2(l-k)\Delta_n}
&= \max_{0<k<l} \left\{ \mu(S_l - S_k) + \frac{(\widetilde{\bm}_{k,l}/\sqrt{\Delta_n} - (l-k)\mu)^2}{2(l-k)} \right\} \\
&= \mu S_l + V_l'
\end{align*}
where 
\begin{align*}
S_l &:= \widetilde{\bm}_{0,l}/\sqrt{\Delta_n} - \frac{1}{2}l\mu \\
V_l' &:= - \min_{0<k<l}\mu S_k + \max_{0<k<l}\left\{ \frac{(\widetilde{\bm}_{k,l}/\sqrt{\Delta_n} - (l-k)\mu)^2}{2(l-k)} \right\}.
\end{align*}
And the stopping rule can be written as 
\begin{align} 
\widetilde\stoptime(\cv) = \inf\left\{l > 0: \mu S_l + V_l' > \cv^2/2 \right\} 
\end{align}

For term $V_l'$, we first introduce $V_l''$, defined as 
\begin{align}
V_l'' &:= - \min_{0<k<l}\mu S_k + V_l, \\
V_l &:= (\widetilde{\bm}_{0,l}/\sqrt{\Delta_n} - l\mu)^2/(2l).
\end{align} 
The sequence $\{V_l''\}_{l \geq 1}$ is slowly changing because, as $l\to\infty$, $\min_{0<k<l}S_k$ converges weakly to $\min_{k>0}S_k$, and the second term $V_l$ weakly converges to $\chi_1^2/2$ random variable ($\chi^2$-distributed r.v.\ with $1$ degree of freedom). Next, as argued in 
\cite[Lemma~1]{dragalin1994optimality}, 
$V_l' = V_l''$ on a set $\mathcal{A}_\cv$ whose $\measure_{\mu}$-probability convergences to $1$ exponentially fast as $\cv\to\infty$ with the rate $1-O(e^{-\lambda\cv})$ for some $\lambda>0$. Hence, 
\begin{align}
\E[V_{\widetilde\stoptime}';\mathcal{A}_\cv] \to \E[V_{\widetilde\stoptime}''] = -\mu\E[\min_{k>0}S_k] + \frac{1}{2}.
\end{align}

Noting that $\mu S_l$ corresponds to a random walk with $N(\mu^2/2,\mu^2)$ increments, \citet[Theorem~3]{lai1979nonlinear} implies that
\begin{align} \label{eqn:EDD_expectationexpansion}
\frac{1}{2}\mu^2\E_0[\widetilde\stoptime(\cv)] = \frac{1}{2}\cv^2 - 0 - \left[-\mu\E[\min_{k>0}S_k] + \frac{1}{2}\right] + \frac{\mu^2\E[S_{\tau_{+}}^2]}{2\mu\E[S_{\tau_{+}}]} + o(1),
\end{align}
that is, 
\begin{align} \label{eqn:EDD_result}
\E_0[\widetilde\stoptime(\cv)] = \frac{\cv^2 - 1}{\mu^2} + \frac{1}{\mu}\left[2\E[\min_{k\geq 0} S_{k}] + \frac{1}{\mu}\frac{\E[S_{\tau_{+}}^2]}{\E[S_{\tau_{+}}]}\right] + o(1),
\end{align}
for which an approximation provided by \cite{siegmund1995using} is
\begin{align} \label{eqn:EDD_result_approx}
\E_0[\widetilde\stoptime(\cv)] \approx \frac{\cv^2-3}{\mu^2} + \frac{4\rho}{\mu},
\end{align}
where $\rho \approx 0.583$.

Now we verify the conditions of \citet[Theorem~3]{lai1979nonlinear} which apply to our case
\begin{itemize}
\item[(A)] For some $\delta>0$, $\measure\{\widetilde\stoptime\leq\delta\cv^2\} = o(\cv^{-2})$ as $\cv\to\infty$,
\item[(B)] Assume that, $\E\,|x_1|^p < \infty$ for some $p>2$,
\item[(C)] $V_l$ converges in distribution to a random variable $V$,
\item[(D)] $x_1$ is non-lattice and there exists $1/2 < \alpha \leq 1$ such that $\E\,|x_1|^{2/\alpha} < \infty$,
\end{itemize}
where $x_i := \mu(\Delta_i^n\widetilde{\bm}/\sqrt{\Delta_n} - \mu/2)$.
In particular, Condition (A) holds as long as the distribution of $\Delta_i^n\widetilde{\bm}$ belongs to exponential density, see discussion in \cite[Section 4]{lai1979nonlinear}.\footnote{Condition (A) is also used in \citet{hagwood1982expansion}, which is their condition (11). The paper shows that it is satisfied when our Condition (B) holds.} Condition (B) can imply (14), (16) and (17) in \cite{lai1979nonlinear} for their Theorem 3, see their Proposition 1, and it is trivially holds in our case since $x_i$'s are normally distributed. Condition (C) is immediate by the weak convergence of $V_l$. Condition (D) is also met trivially in our case because $\widetilde{\bm}$ is a random walk with normal increments.

The result above holds also for the window-limited version of $\widetilde\stoptime(\cv)$ which replaces $0<k<l$ by $l-w_n<k<l$ for an increasing window $w_n$ being chosen such that $w_n \sim a_n\cv^2$ with $a_n$ being some deterministic sequence converging to $a\in (2/c^2,\infty]$, see \citet[Theorem 2]{xie2013sequential} or \citet[Theorem 1]{lai1999efficient}.

We are now ready to show $\mathbb{P}\big(\widetilde\stoptime^{\rm w}(\cv)< M_n\big)\rightarrow 1$. Recall that $\mu = \mu_{M_n} \sim cM_n^{-\vartheta}\Delta_n^{1/2-\vartheta}$, by plugging-in the $\mu_{M_n}$ value we have
\begin{align} \label{eqn:EDD_untruncated}
\E_0[\widetilde\stoptime(\cv)]
&\approx \frac{\cv^2 - 3}{c^2}(M_n\Delta_n)^{2\vartheta}\Delta_n^{-1} + \frac{4\rho}{c}(M_n\Delta_n)^{\vartheta}\Delta_n^{-1/2}.
\end{align}
Now, since $M_n\asymp \Delta_n^{(1/2-\vartheta)/\vartheta}$, we have that $\E_0[\widetilde\stoptime(\cv)]\leq 2\cv^2/c^2$ for $n$ sufficiently high. From here, since $\cv\Delta_n^{\iota}\rightarrow 0$, we have that $\mathbb{P}\big(\widetilde\stoptime^{\rm w}(\cv)< M_n\big)\rightarrow 1$.

\end{proof}

\end{document}